\documentclass[10pt, journal]{IEEEtran}
\IEEEoverridecommandlockouts 

\usepackage{amsthm, amsmath,amssymb,amsfonts}
\usepackage{pifont}                                       %
\usepackage{caption}  
\usepackage{graphics}
\usepackage{graphicx}
\usepackage{gensymb}
\usepackage[noend,  ruled, linesnumbered]{algorithm2e}
\usepackage{color,xcolor}
\usepackage{cases} 
\usepackage{url}
\usepackage{float}  
\usepackage{booktabs}
\usepackage[caption=false,font=footnotesize,labelfont=sf,textfont=sf,subrefformat=parens]{subfig}
\usepackage{enumitem}                                      
\usepackage{textcomp}                                      
\usepackage{stfloats} 
\usepackage{multirow}
\usepackage{makecell} 
\usepackage{cite}
\usepackage{orcidlink}
\usepackage{siunitx} 

\usepackage{comment}                                       
\usepackage{soul}                                          
\soulregister\cite7
\soulregister\ref7
\soulregister\pageref7
\usepackage{etoolbox}                                      
\usepackage{hyperref}
\hypersetup{
    colorlinks = true,
    citecolor  = blue,
    linkcolor  = blue,
    urlcolor   = blue,
}
\captionsetup[subfigure]{subrefformat=simple,labelformat=simple}
\usepackage{cleveref}
\Crefformat{figure}{Fig.~#2#1#3}                           
\Crefname{subfigure}{Fig.}{Figs.}
\Crefname{figure}{Fig.}{Figs.}
\Crefformat{table}{TABLE~#2#1#3}                           

\newtheorem{theorem}{Theorem}

\newtheorem{example}{Example}
\newtheorem{property}{Property}

\hyphenation{op-tical net-works semi-conduc-tor IEEE-Xplore}
\setlength{\textfloatsep}{0pt} 



\newcommand{\kw}[1]{{\ensuremath {\mathsf{#1}}}\xspace}
\newcommand{\marked}[1]{\textcolor{red}{#1}}

\newcommand{\secd}[1]{\SI{#1}{\second}}
\newcommand{\cmark}{\ding{51}}%
\newcommand{\xmark}{\ding{55}}%

\newcommand{\ie}{\emph{i.e.,}\xspace}
\newcommand{\eg}{\emph{e.g.,}\xspace}
\newcommand{\wrt}{\emph{w.r.t.}\xspace}

\newcommand{\rw}{\emph{rewrite}\xspace}
\newcommand{\rf}{\emph{refactor}\xspace}
\newcommand{\rs}{\emph{resubstitution}\xspace}
\newcommand{\rss}{\emph{resub}\xspace}
\newcommand{\rwz}{\emph{rewrite -z}\xspace}
\newcommand{\rfz}{\emph{refactor -z}\xspace}

\newcommand{\aff}{\kw{AFF}}
\newcommand{\inc}{\kw{incLM}}
\newcommand{\incpq}{\kw{pqLM}}
\newcommand{\incour}{\kw{boundLM}}
\newcommand{\dynto}{\kw{dynTO}}
\newcommand{\dynlev}{\kw{dynLev}}
\newcommand{\dynrl}{\kw{dynRL}}
\newcommand{\find}{\kw{findInv}}
\newcommand{\reorder}{\kw{reordInv}}
\newcommand{\insertafter}{\kw{insertAfter}}

\newcommand{\minisection}[1]{\vspace{.06in}\noindent{\textbf{#1}}.}

\markboth{IEEE TRANSACTIONS ON COMPUTERS,~Vol.~xx, Issue.~x, xx~xx~2025 DOI: xxx}%
{Shibarchi \MakeLowercase{\textit{et al.}}: }


\begin{document}

\twocolumn
\title{
Bounded Dynamic Level Maintenance for Efficient Logic Optimization
}

\author{
  Junfeng Liu,~\ 
  Qinghua Zhao,~\ 
  Liwei Ni,~\ 
  Jingren Wang,~\
  Biwei Xie, ~\ 
  Xingquan Li,~\ 
  Bei Yu,~\
  Shuai Ma
  
  \thanks{This paper is currently under review by IEEE TRANSACTIONS ON COMPUTERS.}
  \thanks{Junfeng~Liu, Liwei~Ni and Xingquan~Li are with the Department of Optoelectronic Information and Optical Fiber Communication, Pengcheng Laboratory, Shenzhen, China.}
  \thanks{Qinghua~Zhao are with the School of Artificial Intelligence and Big Data, Hefei University, Hefei, China.}
  \thanks{Jingren~Wang is with the Microelectronics Thrust,  Hong Kong University of Science and Technology (Guangzhou), Guangzhou, China.} 
  \thanks{Biwei Xie is with the Institute of Computing Technology Chinese Academy of Sciences, Beijing, China}
  \thanks{Bei~Yu is with the Department of Computer Science and Engineering, The Chinese University of Hong Kong, Hong Kong SAR.}
  \thanks{Shuai~Ma is with SKLCCSE Lab, Beihang University, Beijing, China.}
}


\maketitle


\begin{abstract}
Logic optimization constitutes a critical phase within the Electronic Design Automation (EDA) flow, 
essential for achieving desired circuit power, performance, and area (PPA) targets. 
These logic circuits are typically represented as Directed Acyclic Graphs (DAGs),
where the structural depth, quantified by node level, critically correlates with timing performance. 
Modern optimization strategies frequently employ iterative, local transformation heuristics (\emph{e.g.,} \emph{rewrite}, \emph{refactor}) directly on this DAG structure.
As optimization continuously modifies the graph locally, node levels require frequent dynamic updates to guide subsequent decisions.
However, a significant gap exists: existing algorithms for incrementally updating node levels are unbounded to small changes. 
This leads to a total of worst complexity in $O(|V|^2)$ for given local subgraphs $\{\Delta G_i\}_{i=1}^{|V|}$ updates on DAG $G(V,E)$. 
This unbounded nature poses a severe efficiency bottleneck, hindering the scalability of optimization flows, particularly when applied to large circuit designs prevalent today.
In this paper, we analyze the dynamic level maintenance problem endemic to iterative logic optimization, framing it through the lens of partial topological order. 
Building upon the analysis, 
we present the first bounded algorithm for maintaining level constraints,  with $O(|V| \Delta \log \Delta)$ time for a sequence $|V|$ of updates $\{\Delta G_i\}$, where $\Delta = \max_i \|\Delta G_i\|$ denotes the maximum extended size of $\Delta G_i$.
Experiments on comprehensive benchmarks show our algorithm enables an average 6.4$\times$ overall speedup relative to \rw and \rf, driven by a 1074.8$\times$ speedup in the level maintenance, all without any quality sacrifice. 
\end{abstract}

\begin{IEEEkeywords}
logic optimization, dynamic level maintenance, incremental graph computation.
\end{IEEEkeywords}  
\IEEEpeerreviewmaketitle
 
\section{Introduction}
\label{sec:intro}

\IEEEPARstart{L}ogic synthesis remains a fundamental cornerstone in electronic design automation (EDA), serving as the critical bridge between high-level hardware description languages and physical implementation~\cite{abc, sis, heinz2019dag, li2024ieda, liu2023aimap, liu2024itmap, dara13computing, wang2025fgnn}.
Within synthesis, multi-level logic optimization critically determines circuit performance, power consumption, and area (PPA), \eg shortening critical paths to improve timing and reducing gate counts to minimize area.
As circuits grow in complexity, the efficiency of these optimization techniques has become essential for meeting design requirements within practical development timeframes~\cite{li2023effsyn, amaru2017logic}.

Due to their effective balance of expressiveness and simplicity~\cite{abc, bryant86graph, luca2016maj}, directed acyclic graphs (DAGs) have become the dominant representation in contemporary logic synthesis frameworks, \eg And-Inverter Graphs (AIGs) and Majority-Inverter Graphs (MIGs). 
Logic optimization on DAGs typically focuses on two primary objectives: reducing circuit graph size (\eg node count) and constraining or reducing graph level,  which directly correspond to the improved area and delay in circuit implementations~\cite{tempia2023improving}.

\begin{figure}[tb!] 
 \subfloat[Typical local transformation-based synthesis flow. \label{subfig:synflow}]{
		\includegraphics[width=0.44\linewidth]{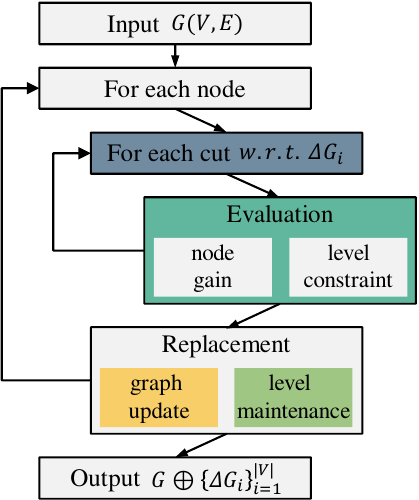}}
  \quad
  \subfloat[Runtime breakdown of local operator \rw~\cite{alan2006dag}. \label{subfig:time_proportion}]{
		\includegraphics[width=0.52\linewidth]{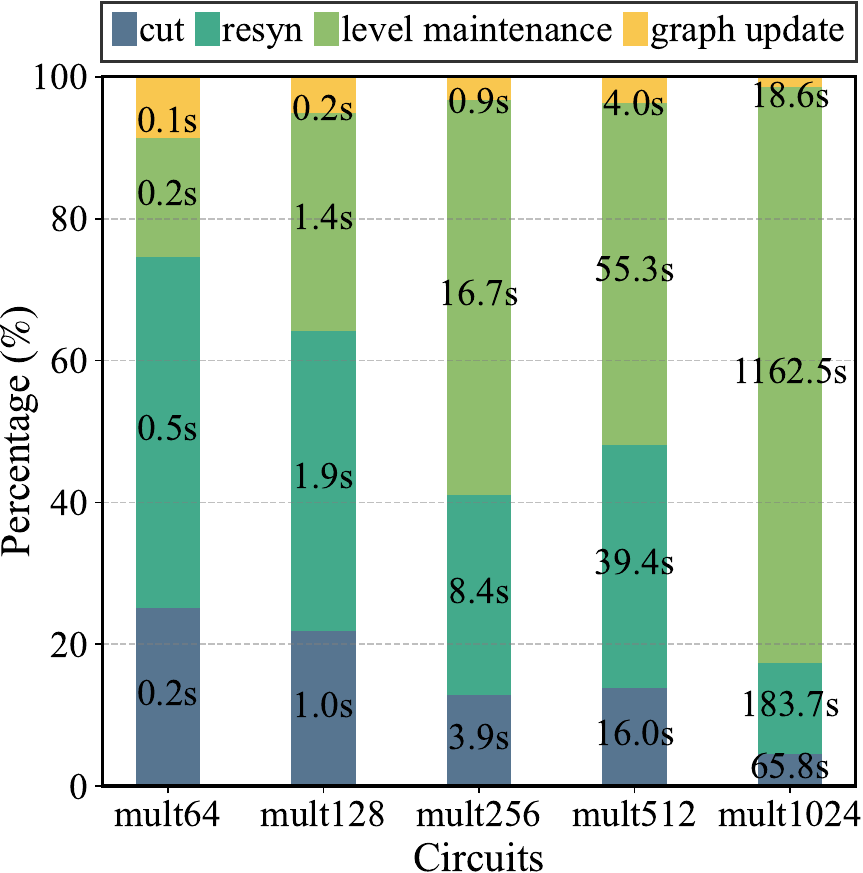}}
\caption{Motivation example: synthesis flow overview and its runtime breakdown.}  
\label{fig:motivation}    
\end{figure}  

However, optimizing the logic circuit \wrt minimizing node counts or DAG levels, is inherently NP-hard, rendering exact solutions computationally intractable~\cite{villa2012synthesis}. Thus, existing synthesis flows typically employ iterative local transformations on DAG representation, to efficiently obtain near-optimal results~\cite{alan2006dag, heinz2019dag, yu2018fast}.
As shown in~\Cref{subfig:synflow}, for each node in the input circuit graph (\ie $G$), the algorithms replace subgraphs with functionally equivalent alternatives using Boolean or algebraic methods, \eg factoring of expressions~\cite{alan2006dag}, or SAT-based approaches~\cite{brayton2006scalable}, where each subgraph is generally delimited by node cuts. 
When a local optimality subgraph (\ie $\Delta G$) is identified, the original graph $G$ replaces the relevant portion with  $\Delta G$, 
and the levels of the resulting graph $G \oplus \Delta G$ are updated to maintain level constraints.
For instance, in the widely used synthesis tool ABC~\cite{abc}, local transformation operators such as \rw, and \rf are employed to identify and apply beneficial subgraphs, optimizing network size while adhering to level constraints~\cite{alan2006dag,  brayton2006scalable}.


The dynamic nature of logic optimization and the computational expense of level updates underscore the importance of \emph{bounded} dynamic level computation. Two key challenges drive this need:
First, modern combinational designs have grown extremely large due to complex logic functions, often with hundreds of millions of nodes~\cite{amaru2017logic, epfl}. 
During synthesis, these graphs undergo constant updates through local transformation operations.
Computing level for the entire updated graph $G \oplus \Delta G$ from scratch after each update $ \Delta G$ becomes prohibitively expensive as design sizes increase.
Second, even dynamic methods that update only affected nodes face significant challenges, as existing dynamic level maintenance methods are \emph{unbounded} \wrt local updates $\Delta G$~\cite{Ramalingam1996dynamic,fan2021inc}.
Specifically, even a small $\Delta G_i$ to $G$ can potentially affect the levels across almost the entire graph,
causing existing incremental methods running in $O(|V|^2)$ for $\{\Delta G_i\}_{i=1}^{|V|}$ total updates~\cite{abc,KatrielMH05}.
However, the resynthesis step among most local operators only runs in $O(|V||C|)$, \eg \rw, where $|V|$ and $|C|$ represent the number of nodes and cuts per node, respectively.
As demonstrated in~\Cref{subfig:time_proportion}, this unbounded level computation becomes increasingly problematic with large designs. 
The five multipliers of increasing bit widths contain 0.04, 0.2, 0.7, 2.9, and 11.7 million nodes, respectively.
The proportion of time spent on level updates grows dramatically from \secd{0.2} out of \secd{1.0}  total runtime to \secd{1162.5} out of \secd{1430.5}. 
This escalating cost highlights the critical need for more efficient level maintenance methods in modern logic synthesis flows.

Several studies have attempted to improve logic optimization efficiency through two approaches: artificial intelligence enhancement~\cite{li2023effsyn, bai2025graph, wang2024ls} and GPU-based acceleration~\cite{li2023recursion, lin2022novelrewrite, liu2024unified, possani2018unlocking}. 
In the first category, Li~\emph{et al.}~\cite{li2023effsyn} employ prediction models to prune candidate cuts, while Bai~\emph{et al.}~\cite{bai2025graph} and Wang~\emph{et al.}~\cite{wang2024ls} introduce classification tasks and learned symbolic functions to prune candidate nodes in resynthesis of \rw and \emph{mfs} operators, respectively. 
In the second category, research by~\cite{li2023recursion, lin2022novelrewrite, liu2024unified, possani2018unlocking} proposes various fine-grained unlocking strategies to exploit parallelism at node or cut level, accelerating \rw efficiency on GPUs or in parallel environments.
Despite these significant efforts, all these methods focus solely on node count reduction as their optimization objective. 
None addresses the more challenging problem of improving efficiency under level constraints, despite this being a more general target~\cite{amaru2017logic}.
Note that, while Katriel~\emph{et al.}~\cite{KatrielMH05} propose a general dynamic longest path maintenance algorithm, their approach has not yielded efficiency improvements in the level-constrained logic optimization.

To this end, we design an efficient dynamic level maintenance algorithm for logic optimization.
To our knowledge, this is the first work to establish a new paradigm that uses dynamic graph computation analysis to theoretically enhance synthesis efficiency. 
The main contributions are as follows. 
 
\begin{enumerate}   
\item We analyze the key insights for bounded level maintenance algorithms, by framing it through the lens of partial topological order.

\item By the analysis,  we present \incour,  which dynamically maintains partial topological order (\dynto), node levels (\dynlev), and reverse levels (\dynrl).
Our \incour is the first bounded algorithm running in $O(|V|\Delta\log \Delta)$  time for $|V|$ updates $\{\Delta G_i\}$, where $\Delta = \max_i \|\Delta G_i\|$ denotes the maximum extended size of $\Delta G_i$. 

\item On large-scale benchmarks (0.214$\times 10^6$ to 41.9$\times 10^6$ nodes),  \incour achieves a 6.4$\times$ average speedup for \rw and \rf operators, with a 1074.8$\times$ acceleration in level maintenance, with preserved result quality.
We verify the \incour's boundedness, scalability, and robustness across different configurations. 
\end{enumerate}


The remainder of this paper is organized as follows: \Cref{sec:pre} provides the necessary background and the problem definition. 
\Cref{sec:analysis} analyzes the key design insights of dynamic level computation.
\Cref{sec:method_continuous}  presents the overview of the dynamic bounded level maintenance for local transformation-based synthesis.
\Cref{sec:method_single}  discusses the details of the bounded algorithm.
\Cref{sec:exp} presents experimental results and in-depth analyses, followed by the conclusion in~\Cref{sec:conclusion}.

\section{Preliminary and Problem Definition}
\label{sec:pre}

In this section, we first introduce basic concepts, followed by the logic optimization flow and the dynamic level maintenance, and conclude with the problem definition.

\subsection{Key Terminology}
\minisection{Boolean Circuit} 
A Boolean circuit $G(V, E)$ is a directed acyclic graph (DAG), where each node corresponds to a logic gate and each directed edge $(x,y)$ represents a wire connecting node $x$ to node $y$~\cite{abc}. 
The fanin and fanout of a node $x \in V$ are its incoming and outgoing edges, respectively. 
The primary inputs (PIs) are nodes without incoming edges, primary outputs (POs) are nodes whose computed functions constitute the signals provided to the circuit's environment. 
The \emph{level/delay} of the circuit is the largest path to any POs.
The AND-Inverter Graph (AIG) serves as a prevalent circuit representation, utilizing only 2-input AND gates as nodes, where inverters are associated with the edges.

 
\minisection{Cut} 
A cut $C$ associated with a node $x$ in Boolean circuit $G$ is a set of nodes $\{c_1, \cdots, c_m\}$ such that every path from a PI to $x$ traverses at least one node in $C$. 
A $k$-feasible cut is defined as a cut $C$ whose size does not exceed a predefined integer $k$, \ie $|C| \leq k$, where $k$ is typically 4 or 6 in practice. 
This constraint limits the logic function's complexity \wrt the subgraph induced by the cut, facilitating rapid logic optimization through local transformations.

\minisection{(Partial) Topological Ordering}
A topological order on a DAG is a strict \emph{total order} relation ``$\prec$'' defined on the set $V$ such that for every edge $(x, y) \in E$, it holds that $x \prec y$. 
For dynamic synthesis scenarios where nodes are processed incrementally, we extend this concept to \emph{partial topological order}. 
Given a subset $V^{-} \subseteq V$, a partial topological order on $V^{-}$ is a strict total order ``$\preccurlyeq$" such that for all $(x, y) \in E$ with $x, y \in V^{-}$,  $x \preccurlyeq y$.
To represent the topological order, standard implementations assign integer labels $ord(x) \in \{1, \ldots, |V|\}$ to maintain this ordering,  but such static assignments are not flexible for dynamic scenarios.
Although sophisticated data structures \eg ordered lists achieve $O(1)$ amortized time complexity for dynamic order test and update operations, they incur significant implementation overhead and high constant factors~\cite{bender2002two, liu2025inc}.

\subsection{Local Transformation-based Logic Optimization} 
The typical local transformation-based synthesis flow is illustrated in~\Cref{subfig:synflow}, which serves as the basis for modern logic synthesis engines~\cite{abc}.
Building upon AIG representations, we briefly introduce the three main iterative steps.

\minisection{Cut Enumeration}
The process begins by identifying candidate regions for optimization. 
For a given node $x$ in the AIG, it computes the set of $k$-feasible cut using a bottom-up or top-bottom approach that combines the set of cuts from $x$'s fanins.
The enumerated cuts are typically pruned by a heuristic function to reduce candidate cuts for efficiency. 
Each cut $C$ induces a subgraph rooted at $x$ whose inputs are the nodes in $C$,  representing the local logic function feeding into $x$.

\minisection{Subgraph Evaluation}
This step evaluates potential optimizations for each subgraph. This involves applying various local transformation operators, \eg \rw, \rf. 
For a detailed review, please refer to literature of~\cite{micheli1994synthesis, alan2006dag, heinz2019dag,brayton2006scalable}.
Specifically, in \rw, 
it attempts to replace the subgraph induced by the cut with a structurally different but functionally equivalent subgraph. 
These equivalent subgraphs are often looked up from a precomputed library, which offers a canonicalized logic function of the cut using NPN equivalence (negation of outputs, permutation and negation of inputs)~\cite{alan2006dag}.
In \rf, it evaluates the candidate logic structures of the cone by a factored form of the root function, with deeper and less structurally biased adjustments compared to \rw~\cite{amaru2018improve}.

During evaluation, each potential subgraph generated by these operators is assessed based on the reduction of AIG nodes while satisfying level constraints.
A common level constraint check is formulated as:
\begin{equation}
\label{eq:level_const}
 \mathcal{L}(x^{\prime}) \leq \mathcal{L}_{\max} - \mathcal{R}(x)  
\end{equation}
where $\mathcal{L}(x^{\prime})$ denotes the level of replacement node $x^{\prime}$ (logic equivalent to $x$),
$\mathcal{L}_{\max}$ represents the circuit's maximum allowed level, and $\mathcal{R}(x)$ indicates the reverse level of $x$ measured from the POs.
This constraint ensures that the update does not unduly increase the circuit's critical path delay.

\minisection{Subgraph Replacement}
If the evaluation identifies a transformation offers acceptable node gain and adheres to level constraints (cf.~\Cref{eq:level_const}), this step implements the changes within the AIG.
The original subgraph induced by the cut (excluding the cut nodes themselves, and potentially reusing the root $x$ if the transformation preserves it) is replaced by the new optimized subgraph ($\Delta G$).
After graph update, the level $\mathcal{L}(\cdot)$ and reverse level $\mathcal{R}(\cdot)$ of $G \oplus \Delta G$ should be recomputed for all affected nodes, 
for subsequent constraint evaluations.

\begin{figure}[tb!]
    \centering
    \includegraphics[width=.86\columnwidth]{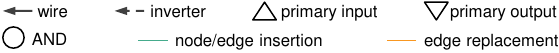}
    \newline 
    \subfloat[Example AIG. \label{subfig:aig}]{
		\includegraphics[width=0.32\columnwidth]{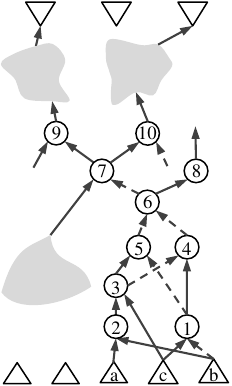}}
    \quad\quad
    \subfloat[Optimized AIG.  \label{subfig:aig_rw}]{
		\includegraphics[width=0.32\columnwidth]{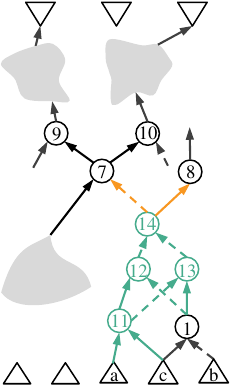}} 
    \caption{ Example of a local optimization on AIG.}
    \label{fig:example_lo}
\end{figure}

We next illustrate the flow with an example.
 \begin{example} 
As shown in~\Cref{fig:example_lo}, consider node $6$ with a $3$-feasible cut $\{a, b, c\}$ that implements function $f = abc + \overline{b}c$.
Through NPN equivalence matching (employed in \rw~\cite{alan2006dag}) or factorization techniques (employed in \rf~\cite{zonglin2024parallel}), a logically equivalent subgraph is identified.
During the evaluation, a candidate replacement subgraph rooted at node $14$ implementing $f = ac + \overline{b}c$ is selected.
This transformation reduces both node count and level by $1$, as node $6$ has level $4$ while node $14$ has level $3$.
The level and reverse level for affected nodes are required to be recomputed. 
\end{example}

\subsection{Dynamic Level Maintenance}
The dynamic computation of level and reverse level becomes evident due to the iterative subgraph replacement.
During the replacement, the substitution of a node with a logic-equivalent subgraph at a different level necessitates incremental level updates for affected nodes in its transitive fanout, \eg node $14$ and its transitive fanout in~\Cref{subfig:aig_rw}.  
  
Algorithm~\inc presents a basic incremental level computation method updating an AIG's level map $\mathcal{L}(\cdot)$ after a subgraph modification $\Delta G$~\cite{abc}, as shown in~\Cref{alg:level_basic}.
It employs a level-based traversal, starting from the nodes affected by $\Delta G$, with a level-indexed data structure $levelVec$ (lines 1-2).
Nodes are processed level-by-level to recompute and update the level (lines 3-5, 12).
When a node's fanout $f$ requires a level update, the level map $\mathcal{L}(f)$ is updated, and $f$ is added to the level-indexed data structure $levelVec$ corresponding to its new level for subsequent processing, propagating changes incrementally (lines 6-11).
  
The time complexity of \inc is $O( \max(\mathcal{L}_{\max}, \|\aff\|)$ for dynamically updating the starting nodes of $\Delta G$, 
where $\mathcal{L}_{\max}$ refers to the maximum level in $G$, and $\aff$ represents the set of nodes whose levels are affected by the subgraph replacement. 
Thus, \inc runs in $O(|V|^2)$ time for $\{\Delta G_i\}_{i=1}^{|V|}$ total updates.
Here, $\|\aff\|$ denotes the extended size of $\aff$, comprising the affected nodes along with their immediate fanins and fanouts.
Throughout this paper, the notation $\|\cdot\|$ consistently denotes the extended size, which captures \emph{the size of the changes in the input and output}, following the convention in incremental graph algorithms~\cite{KatrielMH05,Ramalingam1996dynamic,liu2025inc}.


However, even when the subgraph replacement does not result in actual updates, algorithm~\inc still run in $O(\mathcal{L}_{\max})$, \ie it cannot be bounded solely by the affected nodes.
To address this limitation, Katriel \emph{et al.}~\cite{KatrielMH05} propose algorithm~\incpq using a priority queue to maintain nodes requiring level updates (rather than using level-indexed structures), while following the same update propagation as~\inc.
Algorithm~\incpq approach achieves a time complexity of $O(\|\aff\| + |\aff|\log{|\aff|})$~\cite{KatrielMH05}.

\begin{algorithm}[tb!]
\small
    \caption{\small A preliminary version of incremental level computation (\inc~\cite{abc})} \label{alg:level_basic} 
    \KwIn{AIG $G$, subgraph $\Delta G$, level map $\mathcal{L}$ for $G$  } 
    \KwOut{Updated level map $\mathcal{L}$ for $G \oplus \Delta G$}  
    $levelVec \gets$ Initialize level-indexed node vector\;
    Insert starting nodes from $\Delta G$ into  levels in $levelVec$\;
    \For{$i \gets 0$ \KwTo $|levelVec|-1$}{
        $curLevel \gets levelVec[i]$\;
        \lIf{$curLevel = \emptyset$}{\textbf{continue}}
        
        \ForEach{\textnormal{node} $x \in curLevel$}{
            \ForEach{\textnormal{node} $f \in \textnormal{fanout}(x)$}{
                $l_{\textnormal{new}} \gets 1 + \max\{\mathcal{L}(y) \mid y \in \text{fanin}(f)\}$\;
                
                \If{$l_{\textnormal{new}} \neq \mathcal{L}(f)$}{
                    $\mathcal{L}(f) \gets l_{\textnormal{new}}$\;
                    Insert $f$ into $levelVec[l_{\textnormal{new}}]$\;
                }
            }
        }
    }
    \Return{$\mathcal{L}$ \textnormal{for nodes in} $G \oplus \Delta G$\;}
\end{algorithm}

\subsection{Problem Definition}
 
The dynamic level maintenance for logic optimization problem is defined as follows.

\textbf{Problem.} {\em
Consider a Boolean circuit represented as an AIG $G(V, E)$, and a sequence of logical-equivalent subgraph transformations $\{\Delta G_1,\Delta G_2, \ldots, \Delta G_{|V|}\}$.
Each transformation $\Delta G_i$ modifies the graph $G_{i-1}$ to produce $G_i = G_{i-1} \oplus \Delta G_i$. 
The level-constrained logic optimization problem is to efficiently update node levels $\mathcal{L}$ and reverse levels $\mathcal{R}$ such that 
 the maximum level of updated graph $G_{|V|}$ does not exceed that of $G$, while minimizing the cumulative computational cost.
}  

Note that, the number of subgraph transformations is bounded by the node count $|V|$ of the original graph $G$, 
as logic optimization attempts potential equivalent replacements at each node in $G(V,E)$. 
Unless otherwise specified, $G_0$ denotes the original graph $G$ in this paper.

Different from existing artificial intelligence enhancements~\cite{li2023effsyn, bai2025graph, wang2024ls} and GPU-based acceleration strategies~\cite{li2023recursion, lin2022novelrewrite, liu2024unified, possani2018unlocking}, 
we first tackle the fundamental $O(|V|^2)$ efficiency bottleneck of dynamic level maintenance from a theoretical perspective of complexity optimization.

The main notations are summarized in~\Cref{tab:notation}.


\section{Analysis of Dynamic Level Maintenance}
\label{sec:analysis}
As discussed, adhering to level constraints, particularly preventing increases in the logic network's level, is a fundamental requirement during logic optimization processes. 
This task can be formalized as the dynamic maintenance of node levels and reverse levels within an AIG during structural modifications.
This section analyzes the core insights in designing efficient dynamic level maintenance,  first examining affected regions from subgraph replacements, then identifying properties to reduce these regions for bounded algorithm designs.
%

\subsection{Analysis of Affected Region} 
\label{sec:analysis_affect}

To characterize the level and reverse level affected regions resulting from dynamic graph modifications during logic optimization, we analyze the impact of the update operations. While complex transformations, such as replacing a subgraph rooted at $x$ with a new, logically equivalent subgraph $\Delta G$ rooted at $x^{\prime}$, involve multiple changes, their effects can be decomposed into combinations of unit updates.
We consider \emph{w.l.o.g.} the following unit updates within $G \oplus \Delta G$.

\begin{table}[tb!]
\centering
\caption{Main Notation}
\label{tab:notation}
\setlength{\tabcolsep}{4pt} 
\resizebox{\columnwidth}{!} {
    \begin{tabular}{l|l} \toprule
    Notation & Description  \\  \midrule
    $G(V,E)$                      & Boolean circuit represented as an AIG           \\
    $\Delta G$                    & updates to $G$ (edge insertions, replacements, deletions)  \\ 
    $G \oplus \Delta G$           & circuit obtained by updating $\Delta G$ to $G$ \\ 
    $\Delta $                     & maximum subgraph size, $\max_{1 \leq i \leq |V|} |\Delta G_i|$ \\
    $\mathcal{L}/\mathcal{R}$     & level/reverse level maps for node set $V$  \\ 
    $\mathcal{I}_{x,x^{\prime}}$  & \makecell[l]{node set directly affected by edge insertions during \\ replacing nodes $x$ by $x^{\prime}$ } \\ 
    $\mathcal{P}_{x,x^{\prime}}$  & \makecell[l]{node set directly affected by edge replacements during \\replacing nodes $x$ by $x^{\prime}$ }  \\ 
    $\mathcal{D}_{x,x^{\prime}}$  & \makecell[l]{node set directly affected by edge deletions during\\ replacing nodes $x$ by $x^{\prime}$ }   \\  
    \bottomrule
    \end{tabular}
}
\end{table}

\begin{itemize}[label=\textopenbullet]  
  \item edge insertion: (\kw{insert} $e$ ),  adding an edge (possibly with a new node) constituting the new subgraph $\Delta G$ into $G$.
  \item edge deletion: (\kw{delete} $e$),  removing an edge (possibly with an existing node) from $G$, typically from the original subgraph rooted at $x$. 
  \item edge replacement: (\kw{replace} $e_1, e_2$), while implemented using unit \kw{delete}  $(x,\cdot)$  and \kw{insert} $(x^{\prime}, \cdot)$ on edges,
   it \emph{conceptually} uses redirection connections between logically equivalent nodes by \kw{replace} edges $(x,\cdot), (x^{\prime}, \cdot)$.
\end{itemize}
Any complete subgraph replacement operation can thus be achieved through a coordinated sequence of these unit updates.
Specifically, for a subgraph replacement from $x$ to $x^{\prime}$ with $\Delta G$, the process proceeds as follows:
The new subgraph $\Delta G$ is first incorporated into $G$ via \kw{insert} $e$.
The fanout edges of node $x$ are then redirected to $x^{\prime}$, \ie~\kw{replace} $(x,\cdot)$ $( x^{\prime}, \cdot)$.
Finally, the obsolete subgraph is removed by recursively traversing backward from node $x$ by \kw{delete} $e$, removing nodes and edges that lose all their fanouts.


 
After updating $G_{i-1}(V_{i-1}, E_{i-1})$ to $G_i(V_i, E_i)$ by replacing the subgraph at node $x$ with the new subgraph $\Delta G_i$ rooted at $x'$, we define three sets: $\mathcal{I}_{x,x'}$, $\mathcal{D}_{x,x'}$, and $\mathcal{P}_{x,x'}$. These sets identify nodes directly affected by \kw{insert}, \kw{delete}, and \kw{replace} operations, respectively, which serve as starting nodes for incremental level update algorithms.

$\mathcal{I}_{x,x^{\prime}}$ is the set of nodes directly affected by \kw{insert} edges. 
$\mathcal{I}_{x,x^{\prime}} = \{ n \mid n \in V_i  \wedge  n \notin V_{i-1} \}$, 
it comprises the nodes that are newly connected to $G_i$.
 
$\mathcal{D}_{x,x^{\prime}}$ is the set of nodes directly affected by \kw{delete}.
$\mathcal{D}_{x,x'} = \{n \mid n \in V_i \wedge \exists m, (m,n) \in E_{i-1} \wedge (m,n) \notin E_i \}$,    
it denotes the set of nodes that remain in $G_i$ but have lost at least one fanin during the deletion process.
 
$\mathcal{P}_{x,x^{\prime}}$ is the set of nodes directly affected by \kw{replace} edge.
$\mathcal{P}_{x,x'} = \{n \mid (x,n) \in E_{i-1} \wedge (x',n) \in E_i\}$, 
it comprises nodes whose fanins are redirected from node $x$ to node $x'$ during the subgraph replacement.
  
We observe that the nodes in $\mathcal{I}_{x,x^{\prime}}$ impact both level and reverse level computations, which represents a trivial case. The level of nodes in $\mathcal{I}_{x,x^{\prime}}$ can be easily computed using the definition $\mathcal{L}(n) = 1 + \max\{\mathcal{L}(f) \mid f \in \textnormal{fanin}(n) \}$ for each $n \in \mathcal{I}_{x,x^{\prime}}$, since these nodes are inserted following topological order. 
Besides,  the nodes in $\mathcal{I}_{x,x^{\prime}}$ are reachable from $x^{\prime}$. 
Thus, when we process the reverse level computation for $x^{\prime}$, the nodes in $\mathcal{I}_{x,x^{\prime}}$ are naturally incorporated into the update.

Thus, we focus on non-trivial cases for starting nodes of level and reverse level computation with \kw{replace} and \kw{delete}:
\begin{itemize}[label=\textopenbullet]  

  \item Compute $\mathcal{L}$: identify starting nodes $n \in \mathcal{P}_{x,x'}$  with inconsistent levels, \ie $\mathcal{L}(n) \neq 1 + \max\{\mathcal{L}(f) \mid f \in \textnormal{fanin}(n) \}$.

  \item Compute $\mathcal{R}$: identify starting nodes $n \in \{x'\} \cup \mathcal{D}_{x,x'}$ with inconsistent reverse levels, \ie $\mathcal{R}(n) \neq 1 + \max\{\mathcal{R}(f) \mid f \in \textnormal{fanout}(n)\}$.
  

\end{itemize}

Based on the identified starting nodes, we can characterize the entire affected regions requiring recomputation.

The affected region $\mathcal{A}_{\mathcal{L}}$ comprises all nodes visited during this forward propagation whose levels are potentially incorrect. 
\begin{equation}
\begin{aligned}
\mathcal{A}_{\mathcal{L}} = \Big\{ m \in V_i \mid{} & \exists n \in \mathcal{P}_{x,x'},  n \rightsquigarrow m, 
  \mathcal{L}_{\text{old}}(m)  \\ & \neq 1 + \max_{p \in \text{fanin}(m)} \mathcal{L}_{\text{new}}(p) \Big\} 
\end{aligned}
\label{eq:aff_region}
\end{equation}
where $n \rightsquigarrow m$ indicates reachability from $n$ to $m$, and $\mathcal{L}_{\text{old}}$ and $\mathcal{L}_{\text{new}}$ represent levels before and after the propagation. 
Recomputation is needed within this region until levels stabilize.

Similarly, incremental reverse level computation starts from the nodes whose fanouts are charged. 
The affected region $\mathcal{A}_{\mathcal{R}}$ comprises nodes visited during the backward propagation from these sources until reverse levels stabilize:
\begin{equation}
\begin{aligned}
\mathcal{A}_{\mathcal{R}} = \Big\{ m \in V_i \mid{} & \exists n \in \{x^{\prime}\} \cup \mathcal{D}_{x,x'},  m \rightsquigarrow n, 
  \mathcal{R}_{\text{old}}(m)  \\ & \neq  1 + \max_{p \in \text{fanout}(m)} \mathcal{R}_{\text{new}}(p)   \Big\}
\end{aligned}
\label{eq:aff_region_R}
\end{equation}
where $m \rightsquigarrow n$ denotes $n$ is reachable from $m$, $\mathcal{R}_{\text{old}}$ and $\mathcal{R}_{\text{new}}$ are reverse levels before and after the propagation.

We next illustrate these concepts with an example.

\begin{example}
As shown in~\Cref{fig:example_lo}, also consider the replacement of node $6$ by its logical equivalent node $14$.
\kw{Insert}: the subgraph $\Delta G$ implementing the logic of node $14$ is inserted into the graph $G$, \ie $\mathcal{I}_{6,14} = \{11, 12, 13, 14\}$, in green.
\kw{Delete}: the original node $6$ and its fanouts are deleted, and any nodes that become fanout-free are also recursively deleted, \ie $\mathcal{D}_{6,14} = \{1, c, a, b\}$.
\kw{Replace}: the composite \kw{delete} and \kw{insert} edge on the two logical equivalent nodes refer to replacement, \ie $\mathcal{P}_{6,14} = \{7, 8\}$, in orange.

The affected region for level computation $\mathcal{A}_{\mathcal{L}}$ requires recomputation starting from $\mathcal{P}_{6,14} = \{7, 8\}$ once their level are updated. 
$\mathcal{A}_{\mathcal{R}}$ requires recomputation starting from $\{14\} \cup \mathcal{D}_{6,14} = \{14, 1, c, a, b\}$ once their reverse level are modified.
Benefiting from the updated level $\mathcal{L}$ and reverse level $\mathcal{R}$, the level constraint in~\Cref{eq:level_const} is correctly checked.
\end{example}

\subsection{Analysis of Affected Region Reducing}
\label{sec:analysis_topo}
As analyzed, replacing a node $x$ in $G$ with its logically equivalent subgraph $\Delta G$ rooted at $x^{\prime}$ 
triggers level updates in regions $\mathcal{A}_{\mathcal{L}}$ and $\mathcal{A}_{\mathcal{R}}$.
Reducing the size of affected regions is central to efficient logic optimization.
To achieve this, we next exploit the continuous subgraph updates and the dynamic topological order properties to 
reduce the size of $\mathcal{A}_{\mathcal{L}}$  and $\mathcal{A}_{\mathcal{R}}$, and thereby bounding the region by $|\Delta G_i|$.

\minisection{Selective Updates among Continuous Transformation} 
By problem formulation, the circuit $G$ undergoes a sequence of continuous subgraph updates $\Delta G_i$, to the initial network $G_0$.
Each $\Delta G_i$ generates affected regions $\mathcal{A}_{\mathcal{L}}$ and $\mathcal{A}_{\mathcal{R}}$.
Indeed, $\mathcal{A}_{\mathcal{L}}$ and $\mathcal{A}_{\mathcal{R}}$ are already the minimal regions that maintain the level and reverse maps of the entire $V$~\cite{KatrielMH05}.

However, the level constraint specified in~\Cref{eq:level_const} imposes a more focused requirement. 
Specifically, this constraint necessitates the accurate values of only the candidate replacement root's level, $\mathcal{L}(x^{\prime})$, and the original node's reverse level, $\mathcal{R}(x)$, at the precise moment a transformation decision is made. 
This localized requirement motivates a selective update strategy for $\mathcal{L}$ and $\mathcal{R}$, rather than exhaustively recomputing values throughout the entire affected regions $\mathcal{A}_{\mathcal{L}}$ and $\mathcal{A}_{\mathcal{R}}$. 
Thus, by ensuring the correctness of $\mathcal{L}(x^{\prime})$ and $\mathcal{R}(x)$ just prior to applying each update $\Delta G_i$, the global level constraint remains satisfied throughout the sequence of transformations that evolve $G_0$ through $G_1, \ldots, G_{|V|}$ (where $G_i = G_{i-1} \oplus \Delta G_i$).

\begin{property}
\label{prop:selective_update}
When incrementally replacing node $x$ with its logically equivalent node $x^{\prime}$, 
ensuring the correctness of level constraints for $G$ only requires the incremental maintenance of $\mathcal{L}(m)$ for nodes $m$ in the transitive fanin of $x^{\prime}$, and $\mathcal{R}(n)$ for nodes $n$ in the transitive fanout of $x$.  
\end{property}
The level $\mathcal{L}$ and reverse level map $\mathcal{R}$ are computed based on forward and backward transitive path analyses within the graph, respectively. 
\Cref{prop:selective_update} identifies the sufficient scope for incremental updates based on these computational dependencies.
Besides,~\Cref{prop:selective_update} reveals that strict, immediate maintenance of globally correct level map across all potentially affected nodes ($\mathcal{A}_{\mathcal{L}}$ and $\mathcal{A}_{\mathcal{R}}$) is not mandated after each local update. 
This \emph{relaxation of the update} is fundamental, it motivates and enables the design of our efficient incremental algorithm,
which strategically performs selective updates only where necessary, rather than undertaking exhaustive recalculations.
This marks a crucial shift from computationally intensive global updates to an efficient, selective update paradigm. 

An intuitive optimization attempt is to defer updates within the affected region $\mathcal{A}_{\mathcal{L}}$, employing a lazy strategy. 
The actual level recomputation for these deferred regions is triggered only when a subsequent operation requires the level of a node whose calculation (or that of its transitive dependencies) relies on the pending updates within these regions.

However, as shown in~\Cref{fig:lazy_update} by our comparison with \inc~\cite{abc}, this approach yields limited benefits. 
Although the number of traversals slightly decreases (14.5\%), the aggregation of deferred updates in the queue increases the traversal time (45.8\%). 
This marginal reduction in workload offers negligible improvement in total graph update time and minimal impact on overall logic optimization efficiency. 
 
Therefore, the simple deferral mechanism offers insufficient gains. 
It remains to design a more sophisticated approach that explicitly exploits the dynamic AIG's structure changes occurring within the updates.

\minisection{Dynamic Partial Topological Order Maintenance}
To incorporate the structural properties, we analyze the intrinsic relationship between the level and topological order.

\begin{property}
\label{prop:topo_order_level}
A topological order of DAG is a linear extension of the partial order induced by node levels.
\end{property}

In a DAG, if $\mathcal{L}(m) < \mathcal{L}(n)$, then there exist a path from $m$ to $n$, which implies $m$ precedes $n$ in any valid topological order. It naturally extends the level-induced partial order into a total linear order.
Thus, by~\Cref{prop:topo_order_level},  if we can efficiently maintain the order, we gain a structural skeleton that significantly constrains the affected region for level recomputation.  

In a static graph $G$, levels and reverse levels can be computed efficiently using DFS-based or topological order approaches. 
However, in our dynamic setting, where the graph evolves to $G \oplus \Delta G$, these modifications can invalidate the level information for a large portion of the graph. 
As synthesis iteratively applies numerous small updates $\{\Delta G_i\}_{i=1}^{|V|}$, naively recomputing levels by traversing the affected regions after each update (\eg using~\Cref{alg:level_basic}) becomes too costly. 

\begin{figure}[tb!] 
\centering
\begin{subfloat}
    \centering
    \includegraphics[width=.7\linewidth]{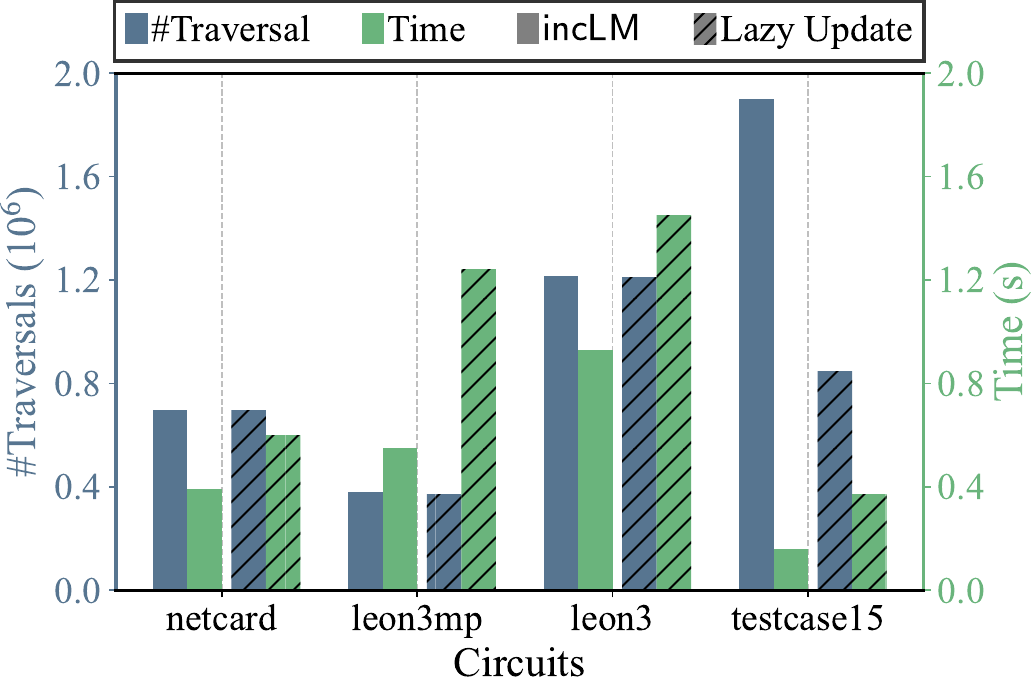}
\end{subfloat} 
\caption{The negligible efficiency gain of lazy level update.} 
\label{fig:lazy_update}   
\end{figure} 

To address this challenge, we directly leverage~\Cref{prop:topo_order_level}.
Instead of discarding all structural information, we formally maintain the \emph{partial topological order} of  $G$.
This maintained order serves as an explicit representation of the structural dependencies between nodes.
When a local modification $\Delta G_i$ occurs, we only need to update this order locally. 
Note that, preserving a complete topological order for each intermediate graph $G_i$ would be computationally excessive. Instead, we focus on the processing order of unhandled nodes by preserving a {partial topological order} on the original node set $V$.

{\emph Remark:} Since we maintain a partial topological order only over unhandled nodes, we prioritize order updates over order tests. 
This enables the use of a simpler data structure while retaining sufficient dynamic capabilities.
Thus, we utilize a \emph{linked list} to balance algorithmic simplicity with dynamic responsiveness, avoiding the implementation complexity of ordered lists~\cite{bender2002two} and the static integer labels~\cite{pk2010batch}.

\section{Bounded Level-Maintained Synthesis Framework}
\label{sec:method_continuous}   
In this section, we present a dynamic bounded level maintenance algorithm specifically designed for local transformation-based synthesis with level constraints.  
Existing dynamic level algorithms~\cite{abc, KatrielMH05, Ramalingam1996dynamic} are unbounded \wrt the local updates $|\Delta G|$, resulting in $O(|V|^2)$ time, causing significant runtime overhead.
In contrast, our algorithm leverages the insights from the analyses to achieve update costs bounded $\|\Delta G\|$ rather than the entire $|V|$ for each subgraph $\Delta G$. 

  
The main result is stated below. 
\begin{theorem}
\label{the:main_complexity}
Given an AIG $G(V, E)$ with maximum allowed level $\mathcal{L}_{\max}$, there exists a bounded dynamic level update algorithm that 
performs a sequence of logic-equivalent subgraph transformations $\{\Delta G_i\}_{i=1}^{|V|}$ to optimize $G$ while maintaining the level constraint 
in $O(|V| \Delta\log\Delta)$ total time, where $\Delta = \max_i \|\Delta G_i\|$ is the maximum extended size of $\Delta G_i$.
\end{theorem}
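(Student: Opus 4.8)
The plan is to prove the bound constructively by exhibiting the algorithm \incour, whose three components---dynamic partial topological order (\dynto), forward level (\dynlev), and reverse level (\dynrl) maintenance---each cost $O(\Delta\log\Delta)$ amortized per transformation. First I would decompose each $\Delta G_i$ into its unit updates (\kw{insert}, \kw{replace}, \kw{delete}) as in~\Cref{sec:analysis_affect}; since every unit update touches an edge whose endpoints lie in the extended update set, there are $O(\|\Delta G_i\|)=O(\Delta)$ of them, reducing the task to bounding the per-step cost and summing over the $|V|$ transformations.

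For \dynlev, the boundedness is comparatively direct: by~\Cref{prop:selective_update} the constraint at step $i$ needs only $\mathcal{L}(x^{\prime})$, and since $\Delta G_i$ sits atop cut nodes whose levels are unchanged and already correct, computing $\mathcal{L}(x^{\prime})$ via $\mathcal{L}(n)=1+\max_{f\in\text{fanin}(n)}\mathcal{L}(f)$ propagates only through the $O(\|\Delta G_i\|)$ newly inserted nodes of $\mathcal{I}_{x,x^{\prime}}$, which by~\Cref{prop:topo_order_level} are handled in the maintained order. Coupled with \dynto---where each unit update perturbs only the $O(\|\Delta G_i\|)$ nodes of the extended affected set, and relocating them on the linked list via a priority queue keyed by the order costs $O(\Delta\log\Delta)$---this settles the forward direction.

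The main obstacle lies in \dynrl, because the constraint requires $\mathcal{R}(x)$ over the transitive fanout of $x$ (toward the POs), a region whose reverse levels may have been perturbed by \emph{earlier} transformations; a naive backward propagation could therefore cascade globally, which is precisely the unbounded behavior of prior methods such as~\incpq. Here I would again invoke~\Cref{prop:selective_update}: starting from $\{x^{\prime}\}\cup\mathcal{D}_{x,x^{\prime}}$ and propagating over $\mathcal{A}_{\mathcal{R}}$ in the maintained order lets us charge each reordered node to the update that last dirtied it. I would formalize this with an amortized argument, defining a potential over the partial topological order that measures the accumulated order slack among unhandled nodes; although a single backward propagation may reach far, the released potential telescopes so that the summed reverse-level work over all $|V|$ steps is $O(|V|\Delta)$, with the extra $\log\Delta$ arising from the priority-queue operations. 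Combining the three components yields $O(\Delta\log\Delta)$ amortized per step and hence $O(|V|\Delta\log\Delta)$ total, as claimed.
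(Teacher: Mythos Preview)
Your decomposition into the three components \dynto, \dynlev, \dynrl mirrors the paper, and your treatment of \dynlev is essentially right. Two points of divergence are worth flagging.

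For \dynto, the paper does not use a priority queue. Procedure \find is a backward DFS from $x'$ that halts at handled nodes, and \reorder splices the resulting list into the linked list $\mathcal{T}$ after the position of $x$; both are linear in $|inv|$, and since the cut leaves of the $k$-feasible cut are already handled, $|inv|\le |\Delta G_i|-k$. This yields $O(\Delta)$ per step (Theorem~\ref{the:topo_order_complexity}), not $O(\Delta\log\Delta)$.

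The substantive gap is in your \dynrl argument. You frame the difficulty as ``earlier transformations may have perturbed reverse levels in the transitive fanout of $x$'' and reach for an amortized potential over ``order slack.'' This concern is misplaced, and the paper needs no amortization at all. Because \incour processes nodes in the maintained partial topological order, every earlier transformation occurred at some $y\preccurlyeq x$; such a transformation inserts nodes in the fanin cone of $y$, redirects \emph{fanins} of $y$'s old fanouts, and deletes nodes that are predecessors of $y$. None of these operations alters the fanout set of any node $z$ with $x\preccurlyeq z$, so $\mathcal{R}(z)$ as computed at initialization remains valid when $x$ is reached. Consequently, the only reverse levels requiring update after replacing $x$ by $x'$ are those of the newly inserted nodes between the cut and $x'$. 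Algorithm~\dynrl propagates \emph{backward} from $\{x'\}\cup\mathcal{D}_{x,x'}$ and prunes at handled nodes (line~8); since every cut leaf is already handled, the traversal is confined to at most $|\Delta G_i|-k$ nodes, and the priority queue gives a \emph{direct} per-step bound of $O(\|\Delta G_i\|\log\|\Delta G_i\|)\le O(\Delta\log\Delta)$ (Theorem~\ref{the:reverse_complexity}). Your potential-function sketch is both unnecessary and too vague to verify; the actual argument is a straightforward locality bound obtained from the $handle$ pruning together with the invariant that cut nodes are processed before $x$.
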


\subsection{Overview of Bounded Algorithm} 
Building on these foundations, we now provide an overview of our level-constrained synthesis framework~\incour enabled by bounded dynamic level computation, as outlined in Algorithm~\ref{alg:overall}.
The general idea of our~\incour is to maintain the partial topological order to reduce the affected regions $\mathcal{A}_\mathcal{L}$ and $\mathcal{A}_\mathcal{R}$,  thereby bounded by $ \Delta G_i $. 

 

\begin{algorithm}[tb!]
    \small 
    \caption{\small Bounded dynamic level constrained synthesis~\incour} \label{alg:overall} 
    \KwIn{AIG $G(V,E)$, maximum allowed level $\mathcal{L}_{\max}$ } 
    \KwOut{Optimized AIG $G$, final max level $\mathcal{L}^{\prime}_{\max}$}  
    Compute topological order $\mathcal{T}$ of $V$\;
    Compute level $\mathcal{L}$ and reverse level map $\mathcal{R}$ of $V$\; 
    Initialize  handle status $handle$ for nodes in $G$\;                    
    \ForEach{ \textnormal{node} $x$ \textnormal{in partial topological order} $\mathcal{T}$}{
        Set $handle[x]$ as true\;
        $\mathcal{T} \gets$ $\mathcal{T}  \setminus x$\;
        Compute $\mathcal{L}$ of $x$ by Alg.~\ref{alg:dynamic_level}~\dynlev\;
        Attempt local synthesis transformation at node $x$\;
        $\Delta G \gets$ replacement subgraph rooted at $x^{\prime}$\; 
        \lIf{$\mathcal{L}(x^{\prime}) > \mathcal{L}_{\max} - \mathcal{R}(x)$}{\textbf{continue}}
        Apply $\Delta G_i$ to $G_{i-1}$, \ie $G_{i} = G_{i-1} \oplus \Delta G_i$, yielding affected node sets $\mathcal{I}_{x,x^{\prime}}$, $\mathcal{P}_{x,x^{\prime}}$, $\mathcal{D}_{x,x^{\prime}}$\;
        Compute  $\mathcal{L}$ of $\mathcal{I}_{x,x^{\prime}}$~by Alg.~\ref{alg:dynamic_level}~\dynlev\; 
        Maintain topological order $\mathcal{T}$ by Alg.~\ref{alg:dynamic_order}~\dynto\; 
        Compute  $\mathcal{R}$ of $ \{x^{\prime}\} \cup \mathcal{D}_{x,x^{\prime}}$ by Alg.~\ref{alg:dynamic_reverse}~\dynrl\;
    }
    Compute the final maximum level  $\mathcal{L}^{\prime}_{\max}$\;
    \Return{\textnormal{Optimized} $G_{|V|}$, $\mathcal{L}'_{\max}$\;}
\end{algorithm}

Specifically, Algorithm~\incour takes as input an AIG $G(V,E)$ and a maximum allowed level $\mathcal{L}_{\max}$, and return the optimized $G$ and the new maximum level $\mathcal{L}^{\prime}_{\max}$ with $\mathcal{L}^{\prime}_{\max} \leq \mathcal{L}_{\max}$. 
It first computes the topological order $\mathcal{T}$ for nodes $V$, along with their corresponding level $\mathcal{L}$ and reverse level map $\mathcal{R}$.  
A handle status, $handle$, is initialized for $V$, to prune affected regions and maintain the partial topological order (lines 1--3).
The core of framework~\incour iterates each node $x$ of original $V$ according to the order $\mathcal{T}$ by~\Cref{prop:topo_order_level} (lines 4--14). 
For each node $x$:
\begin{enumerate}
    \item The handle status $handle[x]$ is set to true, and partial topological order $\mathcal{T}$ is updated by removing $x$ to form a new order for the remaining nodes (lines~5,~6).
    \item The level map $\mathcal{L}$ of $x$ is updated by Algorithm~\ref{alg:dynamic_level}~\dynlev, 
    thereby preserving the correctness of level computations for node in the transitive fanout of $x$ during subsequent processing phases, as established in~\Cref{prop:selective_update}. The details of \dynlev are found in~\Cref{sec:method_level} (line~7).
    \item  A local synthesis transformation is attempted at node $x$, \eg \rw and \rf, optimizing the local logic, often aiming to reduce node count or area.  This attempt may identify a candidate replacement subgraph $\Delta G_i$ (where $i$ can be seen as an iteration index \wrt node $x$) that is locally equivalent to the logic driven by $x$. This $\Delta G_i$ is rooted at a new or existing node $x^{\prime}$ (lines~8--10). 
    \item The transformation is only accepted when $\mathcal{L}(x^{\prime})$ do not exceed the available level budget by~\Cref{eq:level_const}. 
    The graph $G_{i-1}$ is updated to $G_i$ by incorporating the operation \kw{insert}, \kw{delete}, and \kw{replace} from $\Delta G_i$, \ie $G_{i} = G_{i-1} \oplus \Delta G_i$.   It also identifies the directly affected node sets resulting from the replacement $x$ by $x^{\prime}$, \ie $\mathcal{I}_{x,x^{\prime}}$, $\mathcal{P}_{x,x^{\prime}}$, and $\mathcal{D}_{x,x^{\prime}}$ (line~11).
    \item Following the valid order, \Cref{alg:dynamic_level} \dynlev updates the levels $\mathcal{L}$ for newly inserted nodes  $\mathcal{I}_{x,x^{\prime}} $ (line~12). 
    \item Since \kw{replace} may invalidate the partial topological order $\mathcal{T}$, \Cref{alg:dynamic_order}~\dynto maintains $\mathcal{T}$ from $x^{\prime}$, pruning the traversal by $handle$, as detailed in~\Cref{sec:method_topo} (line~13). 
    \item Finally, the reverse levels for nodes affected by the transformation, starting from $ \{x^{\prime}\} \cup \mathcal{D}_{x,x^{\prime}}$, are updated and pruned by $handle$ using algorithm~\ref{alg:dynamic_reverse}~\dynrl, described in~\Cref{sec:method_reverse} (line~14). 
 \end{enumerate}   
After iterating through all nodes in the dynamic-maintained order, the final maximum level $\mathcal{L}^{\prime}_{\max}$ of the resultant AIG $G_{|V|}$ is computed (lines~15, 16).

\section{Dynamic Maintenance for Single Subgraph Update}
\label{sec:method_single}  
Building upon~\incour, designed to manage continuous subgraph updates, iteratively handles single subgraph updates while adhering to level constraints.
We next present three key dynamic components for single updates \ie dynamic partial topological order maintenance to efficiently update structure evolution in~\Cref{sec:method_topo}, dynamic level computation to selectively update node levels only within the necessary affected regions in~\Cref{sec:method_level}, and dynamic reverse level computation to prune recomputation traversals in~\Cref{sec:method_reverse}.

\subsection{Dynamic Partial Topological Order Maintenance}
\label{sec:method_topo}  
We first present our dynamic algorithm~\dynto, which maintains partial topological order throughout graph updates.
This order directly benefits both level and reverse level computations,  based on~\Cref{prop:topo_order_level} in~\Cref{sec:analysis}.

A partial topological order $\mathcal{T}$ over unhandled nodes becomes locally invalid if a graph transformation $\Delta G_i$ introduces new fanin-fanout dependencies among these unhandled nodes that violate their current sequence in $\mathcal{T}$. 
For instance, if an unhandled node $m$ acquires a new fanin $f$ (also unhandled) due to $\Delta G_i$, but $f$ currently appears after $m$ in $\mathcal{T}$ (\ie $m \preccurlyeq f$), then $\mathcal{T}$ is no longer a valid order. Algorithm~\dynto addresses these invalidations caused by \kw{replace} edges.

\begin{algorithm}[tb!]
\small
    \caption{\small Dynamic order maintenance~\dynto} \label{alg:dynamic_order} 
    \KwIn{AIG $G_i$, partial topological order $\mathcal{T}$, handle status $handle$, resynthesis nodes $x$, $x^{\prime}$ } 
    \KwOut{Update partial topological order $\mathcal{T}$} 
    \SetKwComment{tcp}{\footnotesize // }{}
     
    Initialize $visit[n] \gets \textnormal{false}$ for all nodes $n \in V$\;
    $inv \gets \emptyset$ \tcp*{Nodes with invalid orders}
    \find{($x^{\prime}, \ visit, \ handle, \ inv, \ G_i$)}\;
    \If{$handle[x^{\prime}] = \textnormal{false}$} {
        Append $x^{\prime}$ to $inv$\;
    }
    \Return{$\reorder(x, \ inv, \ \mathcal{T})$\;} 
    \BlankLine
    \SetKwProg{Fn}{Procedure}{:}{}
    \Fn{\find{($n, \ visit, \ handle, \ inv$)}}{
        \ForEach{ \textnormal{fanin node} $f$ \textnormal{of} $n$ }{
            \lIf{$handle[f]$}{\KwSty{continue}} 
            \lIf{$visit[f]$}{\KwSty{continue}} 
             $visit[f] \gets \textnormal{true}$\;
             $\find(f, vis, han, inv)$\;
             Append $f$ to $inv$\;
        }  
    }  
    \BlankLine
    \SetKwProg{Fn}{Procedure}{:}{}
    \Fn{\reorder{($x, inv, \mathcal{T}$)}}{
        $curOrd \gets \textnormal{the order element of } x \textnormal{ from } \mathcal{T}$\;
        \ForEach{ \textnormal{node} $f$ in $inv$ }{
            $newOrd \gets \mathcal{T}. \insertafter(curOrd, f)$\;
            Set $newOrd$ to the order element of  $f$ of $\mathcal{T}$\;
            $curOrd \gets newOrd$\;
        }
       \Return{\textnormal{Updated order} $\mathcal{T}$\;}
    } 
\end{algorithm}

As shown in~\Cref{alg:dynamic_order}, \dynto finds the nodes with invalid order (procedure~\find) and reorder these nodes (procedure~\reorder) to maintain the partial topological order $\mathcal{T}$.  
It takes as input the current AIG $G_i$, the partial topological order $\mathcal{T}$ (represented as a linked list of unhandled nodes), the $handle$ status array,  the original resynthesis nodes $x$ and $x^{\prime}$, and returns the updated order $\mathcal{T}$.
Note that, if $x'$ has been handled,  \dynto does nothing since the order $\mathcal{T}$ remains valid. 

Algorithm \dynto first initialize $visit$ to false for nodes in $V$, the global array $inv$ to $\emptyset$ to store nodes with invalid orders (lines 1, 2).
It then invokes procedure \find on the resynthesis node $x^{\prime}$ to discover nodes with invalid partial topological orders, and $x^{\prime}$ is further appended to $inv$ when $x^{\prime}$ is unhandled (lines 3--5).
Finally, procedure \reorder is called to restore the valid partial topological order using the collected invalid nodes in $inv$ (line 6).


Procedure \find discovers the nodes with invalid partial topological orders (stored in $inv$) by backward DFS from $x^{\prime}$ on $\Delta G_i$. 
It accumulates all encountered unhandled and unvisited fanins into $inv$ in a post-order. This ensures $inv$ contains a topologically sorted sequence of all unhandled nodes within the fanin cone of $x^{\prime}$ (lines 8--13).
Unhandled node $x^{\prime}$ itself is appended to $x$ after its fanin cone is explored (line 4, 5).
 
Procedure \reorder is then called to restore a valid partial topological order.  It uses the original resynthesis node $x$ as an anchor point ($curOrd$) in the linked list $\mathcal{T}$. The new unhandled node's order must immediately follow $curOrd$ in the updated partial topological order of $\mathcal{T} \gets \mathcal{T} \setminus x$ (line 17).
It iterates through the nodes $f$ in $inv$, which are already topologically sorted relative to each other (lines 16--19). 
Each node $f$ is removed from its current position in $\mathcal{T}$, and re-inserted into $\mathcal{T}$ immediately after $curOrd$. 
Then,  $curOrd$ is updated to the newly positioned order of $f$. 
It effectively relocates all nodes from $inv$ into $\mathcal{T}$ to dynamically maintain the partial topological order $\mathcal{T}$.


We illustrate~\Cref{alg:dynamic_order} with an example below.
\begin{figure}[tb!]
    \centering
    \quad 
    \includegraphics[width=.80\columnwidth]{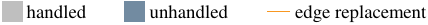}
    \newline 
    \subfloat[Before $\Delta G_i$ update. \label{subfig:rw_before}]{
        \begin{minipage}{0.3 \columnwidth}   
            \centering                           
            \includegraphics[width=0.9\columnwidth]{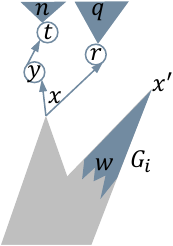}
        \end{minipage} 
    } 
    \quad\quad\quad
    \subfloat[After $\Delta G_i$ update. \label{subfig:rw_after}]{
        \begin{minipage}{0.3 \columnwidth}   
            \centering                           
            \includegraphics[width=0.9\columnwidth]{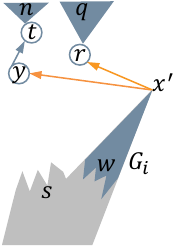}
        \end{minipage} 
    }  
    
    \caption{Example of partial topological order maintenance for replacing $x$ by $x^{\prime}$.}
    \label{fig:example_rw}
\end{figure}

\begin{example}
The example in~\Cref{fig:example_rw} depicts partial topological order maintenance following the replacement of $x$ by $x'$, \ie $x$'s fanouts are redirected to that of $x^{\prime}$.
Note that, we use ``$\cdots$'' to represent nodes within the region.

Initially, assume that the order $\mathcal{T}$ in~\Cref{subfig:rw_before} is  $\mathcal{T} = \{x \preccurlyeq y \preccurlyeq t \preccurlyeq \ldots \preccurlyeq n \preccurlyeq r \preccurlyeq \ldots \preccurlyeq  q \preccurlyeq   w \preccurlyeq  \ldots \preccurlyeq x^{\prime} \}$ in~\Cref{subfig:rw_before}.
However, after $G_{i-1} \oplus \Delta G_i$, $\mathcal{T}$ becomes invalid because the replacement creates a path from $x^{\prime}$ to $y$, requiring $x^{\prime} \preccurlyeq y$ in any valid order, as shown in~\Cref{subfig:rw_after}.
To restore validity,  procedure \find is called on unhandled node $x^{\prime}$, and identifies $inv = \{w, \ldots, x^{\prime}\}$. 
Procedure \reorder places nodes of $inv$ topologically after $x$ in $\mathcal{T}$, creating a new valid order $\mathcal{T} = \{w \preccurlyeq  \ldots \preccurlyeq x^{\prime}  \preccurlyeq y \preccurlyeq t \preccurlyeq \ldots \preccurlyeq n \preccurlyeq r \preccurlyeq \ldots \preccurlyeq  q \}$.
Note that, $x$ is removed from the original $\mathcal{T}$ as it has been handled.
\end{example}

The correctness of \dynto is assured as follows.


\begin{theorem}
\label{the:topo_order_correct}
After each transformation $\Delta G_i$ applied by Algorithm~\incour, Algorithm~\dynto 
correctly updates the sequence of unhandled nodes $\mathcal{T}$ such that it remains a valid partial topological order.
\end{theorem}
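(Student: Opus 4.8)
The plan is to prove Theorem~\ref{the:topo_order_correct} by induction on the sequence of transformations $\{\Delta G_i\}$, showing that if $\mathcal{T}$ is a valid partial topological order on the unhandled nodes before applying $\Delta G_i$ and running \dynto, then it remains valid afterward. The base case is the initial order $\mathcal{T}$ computed in line~1 of Algorithm~\incour, which is a genuine topological order of all of $V$ and hence trivially a valid partial order. For the inductive step, I would first isolate \emph{which} order relations can possibly be violated by $\Delta G_i$. By the analysis in~\Cref{sec:analysis_affect}, the only unit updates that introduce new fanin-fanout dependencies among \emph{unhandled} nodes are the \kw{replace} edges redirecting $x$'s fanouts to $x^{\prime}$ (insertions build $\Delta G$ in topological order and create no backward edges among old unhandled nodes; deletions only remove dependencies). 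Thus the sole source of invalidity is that $x^{\prime}$ now reaches nodes that previously preceded it in $\mathcal{T}$.

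Next I would characterize the set of nodes whose order must change. After the redirection, every node formerly in $\mathcal{P}_{x,x'}$ (a fanout of $x$, hence appearing \emph{before} $x$ in the old order since $x$ was the node just being processed) becomes reachable from $x^{\prime}$. Consequently every unhandled node in the transitive fanin cone of $x^{\prime}$ must now precede these fanout targets. The key claim is that the set $inv$ collected by procedure \find is \emph{exactly} the set of unhandled nodes in the fanin cone of $x^{\prime}$ (including $x^{\prime}$ itself when unhandled), and that the post-order accumulation in \find yields these nodes in a sequence consistent with every edge internal to that cone. I would verify this by a standard argument that a post-order DFS on a DAG lists each node after all nodes reachable from it along the traversed fanin edges, so for any edge $(f_1,f_2)$ with both endpoints unhandled and in the cone, $f_1$ appears before $f_2$ in $inv$.

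The heart of the proof is then showing that procedure \reorder, which splices the topologically-sorted block $inv$ into $\mathcal{T}$ immediately after the anchor position of $x$ (with $x$ itself removed as handled), restores global validity. I would argue this in two parts: (i) the \emph{internal} edges of $inv$ are respected because \reorder preserves the relative order of $inv$ established by \find; and (ii) every \emph{crossing} edge between $inv$ and the rest of $\mathcal{T}$ is correctly oriented. For crossing edges I would case-split on an edge $(u,v)$: if $u\in inv$ and $v\notin inv$, then $v$ is an unhandled fanout target (lying after $x$'s old position, since $v$ was a successor of $x$), and since the whole block $inv$ is inserted right after $x$'s slot, $u$ precedes $v$; if $u\notin inv$ and $v\in inv$, then $u$ is an unhandled fanin of a cone node that was \emph{not} itself pulled into the cone, which forces $u$ to already precede $x$ in the old order and hence to precede the relocated block. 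The main obstacle I anticipate is pinning down this second case rigorously: I must show that any unhandled fanin feeding into the relocated cone from outside already sat before $x$ in $\mathcal{T}$, which relies on the invariant that $x$ is the minimal unhandled element at this iteration (it is the current node being processed in the order $\mathcal{T}$) together with the fact that \kw{insert} edges were added in topological order so they introduce no new backward crossing edges. Once these crossing cases are discharged, I would conclude that for every edge $(u,v)\in E_i$ with both endpoints unhandled we have $u\preccurlyeq v$, completing the inductive step and the theorem.
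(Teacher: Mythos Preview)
Your inductive framework and three-way case split (both endpoints in $inv$; neither in $inv$; exactly one in $inv$) mirror the paper's proof exactly, so the approach is the same. Two points deserve correction, however.

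First, the case you single out as the ``main obstacle''---an unhandled $u\notin inv$ with an edge $(u,v)$ into some $v\in inv$---is in fact \emph{vacuous}. Procedure \find\ performs an exhaustive backward DFS from $x'$ that prunes only at already-handled nodes (line~9 of Algorithm~\ref{alg:dynamic_order}); hence every unhandled fanin of any node in $inv$ is itself visited and appended to $inv$. There simply is no unhandled node feeding the cone from outside, so no argument about ``$u$ preceding $x$'' is needed. The paper disposes of this symmetric subcase in one line for essentially this reason.

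Second, in your easier crossing case ($u\in inv$, $v\notin inv$) the justification ``$v$ was a successor of $x$'' is not quite right: $v$ is a fanout of $u$, not necessarily of $x$. The correct and simpler reason $v$ lies after $x$'s old slot is that $x$ is the \emph{head} of $\mathcal{T}$ at this iteration (it is the current node being processed), so every remaining unhandled node---in particular $v$---already sits after $x$'s position in the old list; after \reorder\ places the $inv$ block immediately at that position, $v$ lands after the entire block and hence after $u$. With these two fixes your argument is complete and coincides with the paper's.
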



\begin{proof}
We show this by a loop invariant.

\emph{Loop invariant:} Before the start of each update $G_{i-1} \oplus \Delta G_i$, Algorithm~\dynto maintains the partial topological order for each $i$ from 1 to $|V|$.  

For iteration $i = 1$, it is easy to verify that the loop invariant holds from line 1 in Algorithm~\incour. 
Assume that the loop invariant holds for $i > 1$, and we show the loop invariant holds for $i + 1$, \ie for any two unhandled nodes $m, n$ in $\mathcal{T}_{i+1}$, if there is a path $m \leadsto n$ in $G_{i+1}$, then $m \preccurlyeq n$ in $\mathcal{T}_{i+1}$.

1) Neither $m$ nor $n$ is in $inv$. 
    Their relative order in $\mathcal{T}_{i+1}$ is the same as in $\mathcal{T}_{i}$, and remains correct as the transformation $G_{i} \oplus \Delta G_{i+1}$ must not have created a new path $m \leadsto n$ that inverts their old order.
2) Both $m$ and $n$ are in  $inv$.
    Since $inv$ is created by \find in a post-order traversal of fanins, $inv$ itself is topologically sorted~\cite{cormen2022introduction}. 
    Procedure \reorder preserves this relative order when inserting into $\mathcal{T}_{i+1}$. 
    Thus, $m \preccurlyeq n$ holds in $\mathcal{T}_{i+1}$.
3) One of $m, n$ is in $inv$, the other is not. 
    We assume $m \in inv$, $n \notin inv$. 
    If $n$ was originally after $curOrd$ and not in $inv$, it will remain after $m$ in $\mathcal{T}_{i+1}$, \ie $m \preccurlyeq n$. 
    This is because all nodes in $inv$ are reinserted immediately after the order of $x$ (\ie $curOrd$).
    If $n$ was originally before $curOrd$, and not in $inv$, then a path $m \leadsto n$ in $G_{i+1}$ introduces a contradiction.
    As $m$ must be handled based on loop invariant $i$, \ie $m$ is handled and not in $inv$ as \find prunes all handles node, contradicting with $m \in inv$ (lines 9, 10 in \dynto).
    We can similarly prove for the case $m \notin inv$, $n \in inv$.   

This shows that the loop invariant holds for $G_{i+1}$.

Putting these together, and we have the conclusion.
\end{proof}

The time complexity of \dynto is analyzed as follows.
\begin{theorem}
\label{the:topo_order_complexity}
Algorithm~\dynto maintains the partial topological order $\mathcal{T}$ in $O(|V| \Delta)$ time for continuous updates  $\{\Delta G_i\}_{i=1}^{|V|}$, where $\Delta = \max_i \|\Delta G_i\|$ is the maximum extended size of $\Delta G_i$. 
\end{theorem}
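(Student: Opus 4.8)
The plan is to bound the total cost of Algorithm~\dynto across the entire sequence $\{\Delta G_i\}_{i=1}^{|V|}$ by charging the work of each invocation to the local update that triggered it. For a single invocation on transformation $\Delta G_i$, the running time decomposes into two phases: the backward DFS in procedure \find and the relinking in procedure \reorder. First I would argue that \find, started from $x^{\prime}$, only explores \emph{unhandled and unvisited} fanin nodes (lines 9--11 prune handled nodes, and $visit$ prevents revisiting), so the traversal never escapes the fanin cone of $x^{\prime}$ restricted to unhandled nodes. The key observation is that this restricted cone is exactly the set of nodes newly introduced or reconnected by $\Delta G_i$ together with their immediate fanin/fanout boundary --- i.e., it is contained in the extended update $\|\Delta G_i\|$. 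I would make this precise by invoking the structural characterization from \Cref{sec:analysis}: the only unhandled nodes reachable backward from $x^{\prime}$ that can have invalid order are those inserted by $\Delta G_i$ (the set $\mathcal{I}_{x,x^{\prime}}$) together with the replace-affected boundary, all of which have size $O(\|\Delta G_i\|)$. Thus $|inv| = O(\|\Delta G_i\|)$ and the DFS touches each such node and its incident edges a constant number of times, giving $O(\|\Delta G_i\|) = O(\Delta)$ work for \find.

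Next I would bound \reorder. Each node $f \in inv$ is removed from its current position in the linked list $\mathcal{T}$ and reinserted immediately after the running anchor $curOrd$; since $\mathcal{T}$ is maintained as a \emph{linked list}, the operations \insertafter and the deletion of $f$ from its old position are each $O(1)$ (this is precisely why the Remark in \Cref{sec:analysis_topo} chose a linked list over an ordered list or static integer labels --- we require only order \emph{update}, not order \emph{test}). Iterating over the $|inv| = O(\Delta)$ nodes therefore costs $O(\Delta)$ time. Combining the two phases, a single invocation of \dynto runs in $O(\Delta)$ time.

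Summing over the whole sequence, there are at most $|V|$ transformations (one attempted per original node, by the problem definition), so the total cost is $\sum_{i=1}^{|V|} O(\|\Delta G_i\|) \le |V| \cdot O(\max_i \|\Delta G_i\|) = O(|V|\Delta)$, which is the claimed bound. I would also note that the initialization $visit[n] \gets \textnormal{false}$ in line~1 is stated over all of $V$ and would naively cost $O(|V|)$ per call, inflating the total to $O(|V|^2)$; so a careful proof must observe that this reset can be implemented lazily (e.g.\ via a timestamp/epoch marker or by resetting only the $O(\Delta)$ entries actually touched by the previous DFS), keeping the per-invocation overhead at $O(\Delta)$. This is the step I expect to be the main obstacle: establishing rigorously that the effective exploration of \find is confined to the extended update $\|\Delta G_i\|$ rather than the full (possibly global) fanin cone of $x^{\prime}$, since \emph{a priori} a backward DFS from $x^{\prime}$ could reach deep into the graph. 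The pruning by $handle$ is what saves us --- every node that was processed in an earlier iteration is marked handled and removed from $\mathcal{T}$, so the unhandled fanin frontier reachable from $x^{\prime}$ cannot extend beyond the freshly modified region; I would prove this confinement by induction on the iteration index, leveraging the loop invariant already established in \Cref{the:topo_order_correct} that $\mathcal{T}$ contains exactly the unhandled nodes in valid partial order before each update.
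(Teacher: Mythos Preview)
Your overall approach matches the paper's: bound a single invocation of \dynto by $O(|inv|)$, argue $|inv| = O(\|\Delta G_i\|)$, then sum over $|V|$ iterations. The decomposition into \find and \reorder with $O(1)$ linked-list operations is also what the paper does.

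The difference lies in how the confinement of \find is established. You propose to argue abstractly, via the node sets $\mathcal{I}_{x,x'}$ and the ``replace-affected boundary'' from \Cref{sec:analysis}, and to close the argument by induction on the loop invariant of \Cref{the:topo_order_correct}. The paper instead uses a concrete structural fact you do not name: the replacement $x'$ is built from a $k$-feasible cut of $x$, so the leaves of the backward DFS from $x'$ are precisely the cut nodes, and every cut node lies in the transitive fanin of $x$ and hence has already been handled by the time $x$ is processed (this is where \Cref{the:topo_order_correct} is invoked). That immediately gives $|inv| \le |\Delta G_i| - k$ without any further induction. Your route would also work, but the cut argument is what actually pins down \emph{why} the unhandled fanin frontier cannot escape $\Delta G_i$; your phrase ``the unhandled fanin frontier reachable from $x'$ cannot extend beyond the freshly modified region'' is the conclusion, and the $k$-feasible cut is the missing mechanism that proves it.

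Your observation about the $visit$ array reset (line~1 of \dynto) is a genuine point the paper glosses over: as written, that line costs $O(|V|)$ per invocation. Your proposed fix (timestamping or resetting only the touched entries) is the standard remedy and is needed for the stated bound to hold.
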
 
\begin{proof}

The time complexity of \dynto for single $\Delta G_i$   is determined by the size of nodes with invalid orders, \ie $inv$. This is because both \find and \reorder procedures perform a one-pass traversal over the nodes in $inv$. 
Thus, the complexity for maintaining the order after a single $\Delta G_i$ is $O(|inv|)$. 
To determine the overall complexity, we establish an upper bound on $|inv|$ from the structural properties of the resynthesis process and partial topological order maintenance.

The resynthesis at node $x$ is based on a $k$-feasible cut that represents the logic function of $x$.
When replacing $x$ by $x'$, the backward traversal in procedure \find from $x'$ is confined within the fanin cone defined by this $k$-feasible cut. 
This is because $x'$ implements the same logic as the original cut of $x$.
Besides, by the definition of $k$-feasible cuts, every cut node $n$ has a path to $x$ (\ie $n \preccurlyeq x$ in $\mathcal{T}$). 
Since Algorithm~\incour processes nodes following the order by~\Cref{the:topo_order_correct}, and $x$ is the current node being handled, all cut nodes $n$ must have been handled already due to $n \preccurlyeq x$.

Therefore, when \find performs backward traversal from $x'$, it encounters these already-handled cut nodes and terminates the traversal (line 9 in Algorithm~\dynto). 
This bounds the size of invalid nodes: $|inv| \leq |\Delta G_i| - k$.

Algorithm~\incour performs $|V|$ resynthesis steps, invoking \dynto for each.
Let $\Delta = \max_i \|\Delta G_i\|$ be the maximum extended size of $\Delta G_i$. 
The total complexity for $\{\Delta G_i\}_{i=1}^{|V|}$ is:
$$\sum_{i=1}^{|V|} O(|inv_i|) \leq \sum_{i=1}^{|V|} O(|\Delta G_i| - k) \leq O(|V| \Delta)$$ 

Putting these together, and we have the conclusion.
\end{proof}


\begin{algorithm}[tb!]
\small
    \caption{\small Dynamic level computation~\dynlev} \label{alg:dynamic_level} 
    \KwIn{AIG $G_i$, level map $\mathcal{L}$, nodes to be updated $U_n$ } 
    \KwOut{Updated level map  $\mathcal{L}$ } 
    \ForEach{\textnormal{node} $n \in U_n$}{
        $\mathcal{L}(n) \gets 1 + \max\{\mathcal{L}(f) \mid f \in \textnormal{fanin}(n)\}$\;
    }
    \Return{\textnormal{Updated level} $\mathcal{L}$\;}
\end{algorithm}

\subsection{Dynamic Level Computation}
\label{sec:method_level}  
We next introduce our dynamic level computation method~\dynlev, which efficiently handles selective updates during continuous subgraph transformations, leveraging the maintained partial topological order.

Based on~\Cref{prop:selective_update}~\Cref{prop:topo_order_level} in~\Cref{sec:analysis}, the order serves as a structural dependency that captures the dynamic graph evolution, constraining the affected regions.
Moreover, by~\Cref{eq:level_const}, only selective level updates are needed to maintain the correct level when replacement node $x^{\prime}$.

As shown in~\Cref{alg:dynamic_level}, \dynlev updates level values for a specified set of nodes. 
It takes as input the current AIG $G_i$, the level map $\mathcal{L}$, and the node set $U_n$ requiring level updates, and returns the updated level map $\mathcal{L}$. 
The computation follows the standard level definition, where each node's level as one plus the maximum level among its fanin nodes. 
In the context of Algorithm~\incour, $U_n$ represents different node sets depending on the invocation: the handling node $x$ (line 7), or the set $\mathcal{I}_{x,x'}$ of newly inserted nodes (line 12). 
Note that, the nodes in $\mathcal{I}_{x,x'}$ naturally maintain proper dependency since they are constructed through forward fanin exploration.

The correctness of \dynlev is assured as follows. 
\begin{theorem}
\label{the:level_correct}
For each candidate replacing $x$ by $x^{\prime}$ in current AIG $G_i$, Algorithm~\dynlev correctly computes $\mathcal{L}(x^{\prime})$.  
\end{theorem}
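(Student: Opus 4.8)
The plan is to reduce correctness of $\mathcal{L}(x^{\prime})$ to a single local condition. Since \dynlev merely assigns $\mathcal{L}(x^{\prime}) \gets 1 + \max\{\mathcal{L}(f) \mid f \in \text{fanin}(x^{\prime})\}$, this value coincides with the true level of $x^{\prime}$ in $G_i$ exactly when every fanin of $x^{\prime}$ already carries its correct level at the moment of the call; the fanin cone of $x^{\prime}$ is fixed by $\Delta G_i$, so it suffices to certify the levels of those fanins. I would first split $\text{fanin}(x^{\prime})$ into two parts using the structure of the resynthesis: because $\Delta G_i$ realizes the same logic as the $k$-feasible cut of $x$, its leaves are exactly the cut nodes of $x$, and every other fanin of $x^{\prime}$ is an internal node of $\Delta G_i$, \ie a member of $\mathcal{I}_{x,x^{\prime}}$. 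The internal nodes are harmless: as observed in the discussion of \Cref{alg:dynamic_level}, $\mathcal{I}_{x,x^{\prime}}$ is produced by forward fanin exploration, so processing it in construction order evaluates each internal node only after its fanins, and a trivial induction rooted at the cut makes all of their levels correct before $x^{\prime}$ (the root, processed last) is reached.

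The substantive part is the cut leaves. By \Cref{prop:topo_order_level} each cut node $c$ has a path to $x$, hence $c \preccurlyeq x$ in $\mathcal{T}$, so \incour handled $c$ at an earlier iteration and recomputed $\mathcal{L}(c)$ via \dynlev (line~7) at that time. I therefore aim to prove, by strong induction on the processing order, the loop invariant that \emph{whenever a node is handled, every node in its transitive fanin carries its correct level}, together with a stability clause stating that this value is never subsequently disturbed before $x$ is reached. The main obstacle is precisely this stability clause, and I would settle it through the affected-region analysis: a later transformation $\Delta G_k$ that replaces $y_k$ by $y_k^{\prime}$ alters levels only inside $\mathcal{A}_{\mathcal{L}}$, which by \Cref{eq:aff_region} is contained in the transitive fanout of $\mathcal{P}_{y_k,y_k^{\prime}}$. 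Every node of $\mathcal{P}_{y_k,y_k^{\prime}}$ is a fanout of the node $y_k$ handled at that iteration and hence strictly succeeds $y_k$ in the order, so $\mathcal{A}_{\mathcal{L}}$ contains only nodes still unhandled at iteration $k$. Consequently, once a node has been handled its level can never again fall inside the changing region of a later transformation, which is exactly the stability clause.

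To close the induction I would combine this confinement with \Cref{the:topo_order_correct}, which guarantees that \dynto keeps $\mathcal{T}$ a valid partial topological order throughout, so the ``path implies precedence'' steps above are legitimate at every iteration. The base case $i = 1$ is immediate because lines~1--2 of \incour compute the order and an exact level map from scratch before any transformation is applied, while \Cref{prop:selective_update} confirms that maintaining correctness only along the transitive fanin of $x^{\prime}$ is all that the constraint check of \Cref{eq:level_const} requires. Assembling the three ingredients---correct cut-leaf levels from the invariant, correct internal-fanin levels from construction order, and one application of the level definition---then yields that \dynlev's assignment equals the true level of $x^{\prime}$, completing the proof.
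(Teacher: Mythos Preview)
Your proposal is correct and follows essentially the same route as the paper: both argue via \Cref{the:topo_order_correct} that the cut leaves are handled before $x$, both establish a loop invariant that handled nodes carry correct levels, and both dispatch $\mathcal{I}_{x,x^{\prime}}$ through its construction order. Your version is more rigorous in spelling out and proving the stability clause---that later transformations cannot disturb already-handled levels because $\mathcal{A}_{\mathcal{L}}$ lies entirely in the unhandled region---which the paper's brief proof leaves implicit.
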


\begin{proof}
We show this by~\Cref{the:topo_order_correct} and loop invariant.

By~\Cref{the:topo_order_correct}, when processing node $x$, all nodes in its transitive fanin cone have been processed and assigned correct level values. Since the replacement node $x^{\prime}$ is derived from a $k$-feasible cut rooted at $x$, all nodes in the cut have already been processed.
The level $\mathcal{L}(x^{\prime})$ is computed topologically based on the newly created nodes $\mathcal{I}_{x,x^{\prime}}$, which grow from the handled cut nodes, using the standard level definition.
That is, as long as the handled nodes have correct levels, then $\mathcal{L}(x^{\prime})$ is computed correctly.

Thus, to establish that~\Cref{the:level_correct} holds, we prove the loop invariant: for any node $x$ with ${handle}[x] = \text{true}$, its level $\mathcal{L}(x)$ is correct \wrt the current graph $G_i$.

Since nodes are processed in topological order, the fanins of  $x^{\prime}$ have been processed with correct levels, as guaranteed by Theorem~\ref{the:topo_order_correct}. Thus, the loop invariant holds, ensuring the correctness of the level computations throughout the algorithm.
 
Putting these together, and we have the conclusion. 
\end{proof}

\begin{algorithm}[tb!]
\small
    \caption{\small Dynamic reverse level computation~\dynrl } \label{alg:dynamic_reverse} 
    \KwIn{AIG $G_i$, level map $\mathcal{R} $, handle status $handle$   } 
    \KwOut{Updated reverse level map  $\mathcal{R}$ } 
    Initialize $visit[n] \gets \textnormal{false}$ for all nodes $n \in V$\;
    Initialize $queue \gets \emptyset$\;
    Push the starting node of $\{x'\} \cup \mathcal{D}_{x,x'}$ into $queue$\; 
    \While{$queue \neq \emptyset$}{
        $n \gets \textnormal{pop from } queue$\; 
        \lIf{$visit[n]$}{\KwSty{continue}}
        \ForEach{\textnormal{node} $m \in \textnormal{fanin}(n)$}{ 
            \lIf{$handle[m]$} {\KwSty{continue}}
            $r_{\textnormal{new}} \gets 1 + \max\{\mathcal{R}(f) \mid f \in \textnormal{fanout}(m)\}$\;
            \lIf{$\mathcal{R}(m) = r_{\textnormal{new}}$} {\KwSty{continue}}
            $\mathcal{R}(m) \gets r_{\textnormal{new}}$\;
            \If{ $! visit[m]$ }{
                $visit[m] \gets$ \textnormal{true}\;
                Push $m$ into $queue$\;
            }
        } 
    }  
    \Return{\textnormal{Updated reverse level} $\mathcal{R}$\;}
\end{algorithm}

The time complexity of \dynlev is analyzed as follows.

\begin{theorem}
\label{the:level_complexity}
Algorithm~\dynlev computes the level map $\mathcal{L}$ in $O(|V|)$ total time for continuous update  $\{\Delta G_i\}_{i=1}^{|V|}$.
\end{theorem}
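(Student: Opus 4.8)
The plan is to decompose the cumulative cost of \dynlev into its two invocation sites inside \incour (\Cref{alg:overall}) and reduce each to a linear node count. First I would note that \dynlev does only local work: for every $n \in U_n$ it evaluates $\mathcal{L}(n) = 1 + \max\{\mathcal{L}(f) \mid f \in \textnormal{fanin}(n)\}$, and since $G$ is an AIG in which every internal node has at most two fanins, this is $O(1)$ per node. Hence a single call costs $O(|U_n|)$, and (crucially, by \Cref{prop:selective_update}) \dynlev never propagates into the full affected region $\mathcal{A}_{\mathcal{L}}$ but only touches the nodes explicitly passed in $U_n$. The entire analysis therefore reduces to bounding $\sum |U_n|$ over the whole sequence $\{\Delta G_i\}_{i=1}^{|V|}$.

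Next I would separate the two call sites. At line~7, $U_n = \{x\}$ is the single node currently handled; by the $handle$ flag together with the removal of $x$ from $\mathcal{T}$ (correctness from \Cref{the:topo_order_correct}), each node is handled at most once, so line~7 contributes $O(1)$ per handled node. At line~12, $U_n = \mathcal{I}_{x,x^{\prime}}$ is the set of freshly created nodes, contributing $O(|\mathcal{I}_{x,x^{\prime}}|)$. Summing and using that every handled node is either an original node of $V$ or was previously inserted, so that the number of handled nodes is at most $|V| + \sum_i |\mathcal{I}_{x,x^{\prime}}|$, the total collapses to $O\!\left(|V| + \sum_i |\mathcal{I}_{x,x^{\prime}}|\right)$. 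Intuitively each node pays $O(1)$ for its level at most twice: once when created (line~12) and once when handled (line~7).

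The main obstacle is therefore showing $\sum_i |\mathcal{I}_{x,x^{\prime}}| = O(|V|)$, i.e. that the cumulative count of newly inserted nodes stays linear. I would argue this from the replacement structure: each $\Delta G_i$ is rooted at $x^{\prime}$ and bottoms out at the already-existing $k$-feasible cut of $x$, so $\mathcal{I}_{x,x^{\prime}}$ consists exactly of the non-leaf nodes synthesized for that cut; for the considered operators (\rw drawing from a fixed NPN library, \rf from a factored form of a $k$-input function) this is bounded by a constant, giving the linear sum directly. To make the bound robust to insert/delete churn rather than leaning only on the constant-library assumption, I would pair it with the acceptance rule of \incour (line~10), under which a transformation is applied only when it does not increase node count, and close the argument with a credit scheme that charges each inserted node two units — one for its line~12 computation and one for its eventual line~7 handling — and shows the total credits issued are $O(|V|)$. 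Establishing this amortization cleanly is the delicate step, since short-lived nodes that are created and soon deleted could in principle inflate the recomputation count; handling that interaction is where I expect the real work to lie.
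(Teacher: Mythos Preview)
Your decomposition into the two call sites (line~7 and line~12 of \Cref{alg:overall}) and the $O(1)$-per-node cost from the bounded AIG fanin are exactly what the paper does. The only divergence is in how $\sum_i |\mathcal{I}_{x,x'}|$ is bounded: the paper dispatches this in a single line by invoking the synthesis heuristic that a transformation is applied only when it strictly reduces node count, and from that concludes directly $\sum_i |\mathcal{I}_{x,x'}|_i < |V|$. Your constant-per-step bound from the $k$-feasible cut / NPN library is a legitimate alternative that yields the same $O(|V|)$ sum, but the credit scheme and the concern about insert/delete churn go beyond what the paper does---it does not engage with that subtlety at all and simply takes the monotone size reduction as sufficient. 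So your plan is correct and structurally identical to the paper's; you are just being more cautious on the one step the paper treats as immediate.
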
  

\begin{proof}
The time complexity of \dynlev arises from dealing with individual nodes and the directly affected nodes by \kw{insert}.

1) For each original node $x \in V$, Algorithm~\dynlev is invoked once to compute $\mathcal{L}(x)$  (line 7 in~\Cref{alg:overall}) . 
Since each node has two fanins in AIG, the total cost for processing all individual nodes is $O(2\cdot|V|)$ time.

2) For newly inserted nodes $\mathcal{I}_{x,x'}$, local transformation-based synthesis flow is heuristic applied only when the subgraph size $|\Delta G_i|$ is reduced (line 8 in~\Cref{alg:overall}). Therefore, $\sum_{i=1}^{|V|} |\mathcal{I}_{x,x'}|_i < |V|$,  yielding total cost $O(|V|)$ time. 

Putting these together, and we have the conclusion.
\end{proof}

Benefiting from the maintained partial topological order,  \Cref{the:level_correct} and \Cref{the:level_complexity} tell us that level-constrained synthesis under continuous subgraph updates can avoid costly level propagation across the affected region $\mathcal{A}_{\mathcal{L}}$. 
Instead, level computation is only required for the newly created nodes and for the nodes followed by partial topological order.

\subsection{Dynamic Reverse Level Computation}
\label{sec:method_reverse}  
We present our dynamic reverse level computation method \dynrl, which efficiently computes the level budget constraints while leveraging the maintained partial topological order.

The reverse level of a candidate node $x$ constrains the maximum budget of its replacement $x^{\prime}$, ensuring that the optimized level does not exceed $\mathcal{L}_{\max}$. 
Based on~\Cref{prop:selective_update}~\Cref{prop:topo_order_level} in~\Cref{sec:analysis}, following the valid order, we can further prune the handled nodes during the exhaustive backpropagation to fanins when updating reverse levels. 
That is, \dynrl mainly computes $\mathcal{R}$ for these newly inserted nodes in $\Delta G_i$.

\begin{figure}[tb!]
    \centering
    \quad 
    \includegraphics[width=.80\columnwidth]{figs/example_topo_legend.pdf}
    \newline 
    \subfloat[Before $\Delta G_i$ update. \label{subfig:reverse_before}]{
		\includegraphics[width=0.32\columnwidth]{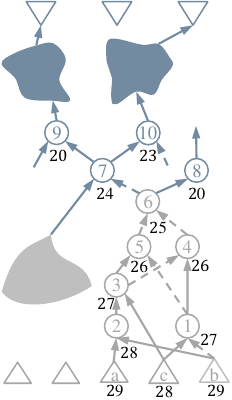}}
    \quad\quad\quad
    \subfloat[After $\Delta G_i$ update.  \label{subfig:reverse_after}]{
		\includegraphics[width=0.32\columnwidth]{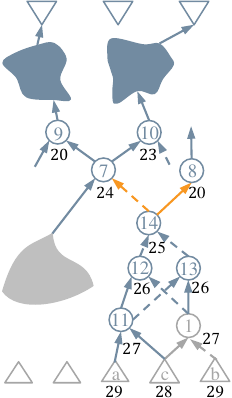}}  
    \caption{Example of reverse level $\mathcal{R}$ computation for replacing $6$ by $14$. The $\mathcal{R}$ of each node is annotated below its node.}
    \label{fig:example_reverse}
\end{figure}

As shown in~\Cref{alg:dynamic_reverse}, \dynrl updates reverse levels after a graph transformation $\Delta G_i$ has been applied. 
It takes as input the current AIG $G_i$, the reverse level map $\mathcal{R}$ and the $handle$ status, and returns the updated $\mathcal{R}$. 
Algorithm~\dynrl first initializes an array $visit$ and a minimum priority queue $queue$ for a backward traversal. The starting nodes in $\{x'\} \cup \mathcal{D}_{x,x'}$ are pushed into the $queue$ (lines 1--3).  
Algorithm~\dynrl iteratively processes to update their reverse levels or trigger updates for their unhandled fanins (lines 4--14).
Specifically, for each popped and currently handling node $n$, it examines each fanin node $m$ of $n$. 
The reverse level only persists updates in the unhandled transitive fanin cone, \ie it is skipped when $m$ has already been handled (lines 5--8). 
The new potential reverse level $r_{\textnormal{new}}$ for $m$ is calculated using the standard definition: $1 + \max\{\mathcal{R}(f) \mid f \in \textnormal{fanout}(m)\}$ (line 9). 
If $r_{\textnormal{new}}$ is different from the old $\mathcal{R}(m)$, then $\mathcal{R}(m)$ is updated to $r_{\textnormal{new}}$ (lines 10, 11). 
If $m$'s reverse level changed and $m$ has not been visited, $m$ and its reverse level $\mathcal{R}(m)$ are pushed into $queue$ for backward propagation (lines 12--14). 
Note that, different from the affected region $\mathcal{A}_\mathcal{R}$, the starting nodes for \dynrl are those with inconsistent reverse level values in $\{x'\}$, since node in $\mathcal{D}_{x,x'}$ has been handled by~\Cref{the:topo_order_correct}.
 
 
By handling nodes in $V$ by the partial topological order, \dynrl effectively constrains recomputation to nodes within $\Delta G_i$ and their immediate neighbors. 
This localized approach provides substantial benefits when transformations occur near POs, where \inc requires traversing almost the entire $V$.

We illustrate~\Cref{alg:dynamic_reverse} with an example below.

\begin{example}
\Cref{fig:example_reverse} illustrates reverse level computation~\dynrl for replacing node 6 by 14.
The update process is initiated from $\{14\}$,  as  $\mathcal{D}_{6,14} = \{1, c, a, b\}$ has been handled.

Algorithm~\dynlev first updates $\mathcal{R}(14) = 25$, then triggers backward propagation to affected $\{12, 13, 11\}$ based on their reverse levels as priority.
Thus,  $\{12, 13, 11\}$ are sequentially dequeued from $queue$ and assigned updated reverse levels: $\mathcal{R}(12) = 26$, $\mathcal{R}(13) = 26$, and $\mathcal{R}(11) = 27$.
\end{example}

The correctness of \dynrl is assured as follows. 
\begin{theorem}
\label{the:reverse_correct}
For each candidate replacing $x$ by $x^{\prime}$ in current AIG $G_i$, Algorithm~\dynrl correctly computes $\mathcal{R}(x)$.  
\end{theorem}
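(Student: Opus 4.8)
The plan is to mirror the proof of \Cref{the:level_correct}, but on the backward (fan-out) side, the essential asymmetry being that reverse levels depend on fan-outs, which under the PI-to-PO processing order of \incour are precisely the \emph{unhandled} (downstream) nodes. Accordingly, I would phrase the loop invariant over unhandled nodes rather than handled ones. First I would record the structural consequence of \Cref{the:topo_order_correct}: since nodes are processed in a valid partial topological order and every fan-out $f$ of $x$ satisfies $x \preccurlyeq f$, when $x$ is reached all nodes in its transitive fan-out are still unhandled; dually, every ancestor (transitive fan-in) of $x$ is already handled, so no unhandled node lies upstream of $x$.

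With this in hand, I would state the invariant: before each node is processed, for every unhandled node $n$ the stored value $\mathcal{R}(n)$ equals the true reverse level of $n$ in the current graph $G_{i-1}$. The base case is immediate from line~2 of \Cref{alg:overall}, which computes $\mathcal{R}$ correctly for $G_0$ when all nodes are unhandled. For the inductive step I would fix the transformation replacing $x$ by $x'$ (yielding $G_i = G_{i-1}\oplus\Delta G_i$ together with $\mathcal{I}_{x,x'},\mathcal{P}_{x,x'},\mathcal{D}_{x,x'}$) and argue correctness for two disjoint classes of unhandled nodes of $G_i$. For a pre-existing unhandled node $n$, no directed path from $n$ to a PO can pass through $x$ (such a path would make $n$ an ancestor of $x$, hence handled, a contradiction), so its true reverse level is unchanged by the local replacement; moreover \dynrl does not overwrite $\mathcal{R}(n)$, since its backward traversal starts at $x'$, skips handled nodes (line~8 of \Cref{alg:dynamic_reverse}), and therefore stays inside the newly inserted cone, terminating at the already-handled cut nodes. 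Hence $\mathcal{R}(n)$ remains correct for $G_i$. For a newly inserted node $n\in\mathcal{I}_{x,x'}$, I would observe that $x'$ inherits the fan-outs of $x$, which are unhandled and (by the previous class) correct, so $\mathcal{R}(x')=1+\max\{\mathcal{R}(f)\mid f\in\mathrm{fanout}(x')\}$ is evaluated from correct data; the min-priority queue then finalizes each node only after all of its fan-outs, which have strictly smaller reverse level, have been settled, so every node of $\mathcal{I}_{x,x'}$ receives its correct reverse level.

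Finally I would close the argument: when $x$ is about to be processed, the invariant gives $\mathcal{R}(x)$ correct for $G_{i-1}$, and since no graph modification occurs between setting $handle[x]$ and the constraint check of \Cref{eq:level_const}, the value of $\mathcal{R}(x)$ consumed there, as maintained by \dynrl in accordance with \Cref{prop:selective_update}, is exactly the true reverse level.

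The step I expect to be the main obstacle is the inductive preservation of the invariant, specifically simultaneously justifying that (i) the reverse levels of \emph{all} pre-existing unhandled nodes are genuinely invariant under the replacement, which rests on the topological guarantee that every ancestor of $x$ is handled, so the replaced region never lies on a fan-out path of an unhandled node; and (ii) \dynrl's priority-queue backward propagation computes the new nodes in the correct dependency order (fan-outs before fan-ins) without circularity, reusing the downstream values already certified by class (i). The level-side analog in \Cref{the:level_correct} is cleaner because fan-ins are handled and thus frozen; here I must additionally verify that the untouched downstream values really do stay correct and that \dynrl's traversal never escapes the inserted cone into the unhandled region.
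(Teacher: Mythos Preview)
Your proposal is correct and follows essentially the same approach as the paper's proof: both establish a loop invariant on the reverse levels of \emph{unhandled} nodes, invoke \Cref{the:topo_order_correct} to guarantee that the relevant fan-out cone is unhandled while the fan-in cone (including cut nodes) is handled, and appeal to \Cref{prop:selective_update} to justify that correctness on unhandled nodes suffices. The paper presents only a two-line sketch deferring to the analogous arguments in \Cref{the:topo_order_correct} and \Cref{the:level_correct}, whereas you spell out the two inductive cases (pre-existing unhandled nodes remain untouched because the backward traversal terminates at handled cut nodes, and newly inserted nodes are settled in dependency order by the priority queue); your version is more thorough but not a different route.
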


\begin{proof}
We present a proof sketch using loop invariants.

By~\Cref{prop:selective_update} and~\Cref{the:topo_order_correct}, it is sufficient to maintain the correct reverse levels only for unhandled nodes. 
Algorithm \dynrl ensures this by pruning backward propagation at any handled node $m$ (line 8). 
Correctness follows similarly with~\Cref{the:topo_order_correct,the:level_correct}.
\end{proof}

The time complexity of \dynrl is analyzed as follows.

\begin{theorem}
\label{the:reverse_complexity}
Algorithm~\dynrl computes the level map $\mathcal{L}$ in $O(|V| \Delta \log \Delta )$ time for continuous update  $\{\Delta G_i\}_{i=1}^{|V|}$, where $\Delta = \max_i \|\Delta G_i\|$ is the maximum extended size of $\Delta G_i$. 
\end{theorem}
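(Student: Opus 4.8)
The plan is to mirror the amortized argument used for \dynto in \Cref{the:topo_order_complexity}, bounding the per-update work by the extended size $\|\Delta G_i\|$ and then summing over the $|V|$ transformations. First I would fix a single update $\Delta G_i$ that replaces $x$ by $x^{\prime}$ and argue that the set of nodes whose reverse levels are actually recomputed is confined to the newly inserted fanin cone of $x^{\prime}$. The key ingredient is \Cref{the:topo_order_correct}: since \incour processes nodes in the maintained partial topological order and $x$ is the current node, every node of the $k$-feasible cut that defines $x^{\prime}$ has already been handled. Because \dynrl prunes its backward traversal at any handled node (line 8 of \Cref{alg:dynamic_reverse}), the propagation seeded from $\{x^{\prime}\}\cup\mathcal{D}_{x,x^{\prime}}$ cannot escape the cut and is therefore contained in the unhandled portion of $\Delta G_i$. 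I would also note, as remarked before \Cref{the:reverse_complexity}, that the members of $\mathcal{D}_{x,x^{\prime}}$ are already handled and so contribute no propagation, reducing the effective seed to $\{x^{\prime}\}$. This yields an $O(|\Delta G_i|)$ bound on the number of nodes ever pushed into the queue, exactly as $|inv|$ was bounded in \Cref{the:topo_order_complexity}.

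Next I would charge the remaining work. Each node is inserted into and removed from the priority $queue$ at most once, because the $visit$ flag is set at push time (lines 12--14), so there are $O(|\Delta G_i|)$ queue operations, each costing $O(\log|\Delta G_i|)=O(\log\Delta)$ on a heap of size $O(\Delta)$. The only non-constant per-node cost is the recomputation $r_{\mathrm{new}}=1+\max\{\mathcal{R}(f)\mid f\in\textnormal{fanout}(m)\}$, which scans the fanout of $m$. Summing $|\textnormal{fanout}(m)|$ over all processed $m$ equals the number of edges incident to the processed nodes, which is $O(\|\Delta G_i\|)$ by the extended-size convention; hence this contributes only $O(\Delta)$ per update and is dominated by the heap term. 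Combining, a single update costs $O(\Delta\log\Delta)$, and summing over the $|V|$ transformations gives the claimed $\sum_{i=1}^{|V|}O(|\Delta G_i|\log|\Delta G_i|)\le O(|V|\Delta\log\Delta)$.

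The step I expect to be the main obstacle is the $visit$-array initialization on line 1: a literal reset of $visit[n]$ for all $n\in V$ costs $\Theta(|V|)$ per call and would inflate the total to $O(|V|^2)$, destroying the bound. I would resolve this by implementing $visit$ with a per-node generation timestamp compared against a global counter, so that ``initialization'' becomes an $O(1)$ counter increment and only the $O(\Delta)$ genuinely touched nodes are ever stamped; equivalently, one may reset exactly the nodes visited during the update. I would state this implementation assumption explicitly, since it is what makes the initialization cost bounded by $\Delta$ rather than $|V|$, and it is the point most easily overlooked in an otherwise routine charging argument. With this in place, the correctness guaranteed by \Cref{the:reverse_correct} together with the per-update bound above completes the proof.
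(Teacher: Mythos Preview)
Your proposal is correct and follows essentially the same route as the paper: bound the backward-propagation region by $\|\Delta G_i\|$ via the handled-cut argument of \Cref{the:topo_order_complexity}, charge $O(\Delta\log\Delta)$ per update for the priority queue, and sum over $|V|$ updates. Your accounting is in fact more careful than the paper's---you separate the fanout-scan cost from the heap cost and, importantly, flag the $\Theta(|V|)$ reset of $visit$ on line~1 of \Cref{alg:dynamic_reverse} (which the paper's proof ignores) and supply the standard timestamp fix.
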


\begin{proof}
For each $\Delta G_i$, Algorithm~\dynrl updates the reverse levels of unhandled nodes within the logical cone corresponding to the new node $x'$. 
As established in the proof of~\Cref{the:topo_order_complexity}, this affected region is bounded by the extended size $\|\Delta G_i\|$, which includes nodes derived from the $k$-feasible cut along with their immediate neighbors. The number of nodes requiring reverse level updates is $|\Delta G_i| - k \leq \|\Delta G_i\|$. 

During the updates, \dynrl traverses the fanouts of affected nodes to derive updated reverse level values (line 9). 
\dynrl maintains a priority queue containing at most $|\Delta G_i| \leq \|\Delta G_i\| $ nodes, where each update requires $O(\|\Delta G_i\| \log \|\Delta G_i\|)$.
 
Since $\|\Delta G_i\| \leq \Delta$, each update takes at most $O(\Delta \log \Delta)$ time, thereby running $O(|V| \Delta \log \Delta)$ for $|V|$ updates.

Putting these together, and we have the conclusion.
\end{proof}

By Algorithm~\incour, we can now establish~\Cref{the:main_complexity}.

\begin{proof}[Proof of Theorem~\ref{the:main_complexity}]
The proof proceeds in two parts: correctness and complexity.

First, the level map $\mathcal{L}$ and reverse level map $\mathcal{R}$ required in~\Cref{eq:level_const} are correctly computed by Algorithms~\dynto,~\dynlev, and~\dynrl,  from~\Cref{the:topo_order_correct,the:level_correct,the:reverse_correct}. 
Therefore, the level constraints of AIG $G$ are satisfied.

Second, Algorithm~\incour maintains level constraints in $O(|V| \Delta \log \Delta)$ time, since \dynto,~\dynlev, and~\dynrl run in $O(|V| \Delta)$, $O(|V| )$, and $O(|V| \Delta \log \Delta)$, respectively, as proven in~\Cref{the:topo_order_complexity,the:level_complexity,the:reverse_complexity}.

Putting these together, and we have the conclusion.
\end{proof}

\emph{Remarks:}
The boundedness of Algorithm~\incour tells us we guarantee that \incour is always more efficient than others, \eg \inc and \incpq, especially when $\Delta G$ is small and $G$ is big.
We maintain levels only through localized updates rather than exhaustive traversal of the entire $G$~\cite{KatrielMH05, abc, Ramalingam1996dynamic}.
In practice,  $\Delta G$ in local transformation-based synthesis is very small, typically with fewer than 10 nodes, leading to $\Delta$ on the order of tens of nodes, \eg 4-feasible cut \rw. 
Under such conditions, $\Delta$ can be treated as a practical constant, making \incour effectively linear in $|V|$.
  
\section{Experimental Result}
\label{sec:exp}
\subsection{Experimental Setting} 
We conduct all experiments on an Intel(R) Xeon(R) Gold 6252 CPU 2.10GHz, 128GB RAM, and an Ubuntu 18.04 system.
All tests are repeated over 3 times and the average is reported. 
The proposed algorithms is integrated into the widely used logic synthesis tool ABC~\cite{abc} to evaluate its effectiveness in logic optimization. 

\begin{table}[tb!]
\centering
\caption{Dataset summarization}
\label{tab:datasets}
\setlength{\tabcolsep}{8pt} 
\resizebox{.9\columnwidth}{!} {
\begin{tabular}{l|rrrr} \toprule
Circuit       & \#Input & \#Output & \#AND      & Level \\ \midrule
hyp~\cite{epfl}         & 256     & 128      & 214,335    & 24,801  \\
mult256~\cite{abc}     & 512     & 512      & 724,133    & 2,377   \\
netcard~\cite{albrecht2005iwls}     & 195,730 & 97,805   & 802,919    & 39      \\
sort1024~\cite{abc}    & 1,024   & 1,024    & 2,773,507  & 2,661   \\
mult512~\cite{abc}     & 1,024   & 1,024    & 2,905,935  & 4,961   \\
sort2048~\cite{abc}    & 2,048   & 2,048    & 10,769,494 & 5,528   \\
mult1024~\cite{abc}    & 2,048   & 2,048    & 11,695,025 & 9,917   \\
sixteen~\cite{epfl}     & 117     & 50       & 16,216,836 & 140     \\
twenty~\cite{epfl}      & 137     & 60       & 20,732,893 & 162     \\
twentythree~\cite{epfl} & 153     & 68       & 23,339,737 & 176     \\
sort4096~\cite{abc}    & 4,096   & 4,096    & 41,920,515 & 13,924  \\ \bottomrule
\end{tabular}
}
\end{table}

\minisection{Dataset}
\Cref{tab:datasets} provides an overview of the circuits used in this study, sourced from the EPFL combinational benchmark suite~\cite{epfl}, IWLS 2005~\cite{albrecht2005iwls}, and an AIG generator from ABC~\cite{abc}. To evaluate the scalability of various level-constrained logic optimization algorithms, the AIGs vary significantly in size, ranging from 214 hundred to 42 million nodes. Note that, AIGs produced by the AIG generator are typically compact and resistant to further optimization. 
To facilitate optimization in our experiments, we utilize a modified \rw operator designed with negative node gain. 
This operator is applied to 10\% of randomly selected nodes, thereby enabling optimization of these initially compact AIGs.

\minisection{Baseline}
We compare our bound dynamic level computation~\incour with~\inc~\cite{abc} and~\incpq~\cite{KatrielMH05}. 
Algorithm~\inc~\cite{abc} employs a predefined level-indexed vector to store candidate nodes at each level. It propagates updates throughout the entire affected regions of $\mathcal{A}_\mathcal{L}$ and $\mathcal{A}_\mathcal{R}$. Each individual update $\Delta G_i$ requires $O(\max(\mathcal{L}_{\max}, \|\aff\|))$ time, with the overall complexity in  $O(|V|^2)$ for continuous updates $\{\Delta G_i\}_{i=1}^{|V|}$. It serves as the default implementation strategy in ABC.
Algorithm~\incpq~\cite{KatrielMH05} utilizes a priority queue to maintain nodes requiring level updates, enabling efficient identification of $\mathcal{A}_\mathcal{L}$ and $\mathcal{A}_\mathcal{R}$. Each individual update $\Delta G_i$ requires $O(\|\aff\| + |\aff\log{\aff})$ time, with the overall complexity in $O(|V|^2\log{|V|})$ for the sequence of updates $\{\Delta G_i\}_{i=1}^{|V|}$.  
Note that, the correctness of optimized circuits is verified through combinational equivalence checking~\cite{abc}.

To evaluate the practical impact of our approach, we extend all three algorithms (\inc, \incpq, \incour) to cut-based local optimization operators: \rw~\cite{alan2006dag} and \rf~\cite{alan2006dag,amaru2018improve}. 
Moreover, to assess the scalability of our algorithm, we further apply it to \rs(in short \rss), which performs window-based optimization by constructing equivalent nodes from feasible divisor sets~\cite{abc}.


\begin{table*}[tb!] 
\caption{Efficiency comparison of different incremental level computation algorithm for \rw on benchmarks}
\label{tab:rewrite_comparison}
\setlength{\tabcolsep}{5pt} 
\centering
\resizebox{.98\textwidth}{!} {
\begin{tabular}{l|rrrr|rrrr|rrrr||c} \toprule
            & \multicolumn{4}{c|}{\inc~\cite{abc}}   & \multicolumn{4}{c|}{\incpq~\cite{KatrielMH05}}   & \multicolumn{4}{c||}{\incour}    \\ 
\cline{2-13}
\multirow{2}{*}[2ex]{Circuit}   &\makecell{\#AND \\ ($10^6$) } &  Level & \makecell{Level m.t. \\ time (s)} & \makecell{All \\ time (s)} & \makecell{\#AND \\ ($10^6$) } &  Level & \makecell{Level m.t. \\ time (s)} & \makecell{ All \\ time (s)} &  \makecell{\#AND \\ ($10^6$) } &  Level & \makecell{Level m.t. \\ time (s)} & \makecell{All \\ time (s)} & \multirow{2}{*}[2ex]{\makecell{Gain node \\ ratio ($\%$)}} \\ \midrule
hyp       & 0.214   & 24,801  & 0.01      & 3.4       & 0.214     & 24,801 & 0.0         & 3.6     & 0.214  & 24801 & 0.0 & 3.4     &0.03   \\
mult256   & 0.551   & 2,293   & 16.7      & 29.9      & 0.551     & 2,293  & 21.5      & 35.5      & 0.551  & 2293  & 0.1 & 13.4     & 7 \\
netcard   & 0.521   & 38      & 8.6       & 24.0      & 0.521     & 38     & 9.0       & 25.3      & 0.521  & 38    & 0.2 & 15.7      & 12 \\
sort1024  & 2.095   & 2,354   & 357.3     & 411.2     & 2.095     & 2,354  & 580.8     & 640.1     & 2.094  & 2150  & 0.4 & 52.1     & 9 \\
mult512   & 2.211   & 4,960   & 55.3      & 114.8     & 2.211     & 4,960  & 50.6      & 112.4     & 2.216  & 4960  & 0.5 & 59.1    & 10 \\
sort2048  & 8.383   & 4,505   & 3,462.1   & 3,717.3   & 8.383     & 4,505  & 6,193.9   & 6,454.2   & 8.382  & 4095  & 1.4 & 215.5   & 11 \\
mult1024  & 8.769   & 9,511   & 1,162.5   & 1,430.5   & 8.769     & 9,511  & 1,621.5   & 1,906.5   & 8.769  & 9511  & 2.3 & 265.1   & 11 \\
sixteen   & 12.178  & 109     & 6,213.5   & 7,290.7   & 12.178    & 109    & 7,205.5   & 8,384.8   & 12.178 & 109   & 3.9 & 1039.4   & 25 \\
twenty    & 15.512  & 111     & 9,372.6   & 10,860.9  & 15.512    & 111    & 11,370.8  & 13,015.2  & 15.512 & 111   & 4.2 & 1425.5  & 25 \\
twentythree & 17.373  & 129     & 11,168.2  & 12,945.9  & 17.373    & 129    & 13,144.4  & 15,132.2  & 17.374 & 129   & 5.0 & 1721.0 & 25  \\
sort4096  & 33.628  & 8,599   & 52,543.5  & 53,674.8  & 33.628    & 8,599  & 92,489.2  & 93,646.7  & 33.545 & 8191  & 5.4 & 948.6   & 11 \\ \midrule
\makecell{Average \\ improvement}   & 1.00    & 1.02    &1,812.6    &10.5      & 1.00      & 1.02   & 2,833.2    &16.5      & 1.00      & 1.00  & 1.00   & 1.00    & 13 \\ \bottomrule  
\end{tabular}
}
\vspace{-4ex}
\end{table*}


\subsection{Efficiency Analysis of Algorithm~\incour}  
To assess the efficiency of our~\incour, we integrated it alongside the baseline \inc~\cite{abc} and \incpq~\cite{KatrielMH05} into two prominent local operators: \rw and \rf.

\begin{figure}[tb!]
    \centering 
    \includegraphics[width=0.7\columnwidth]{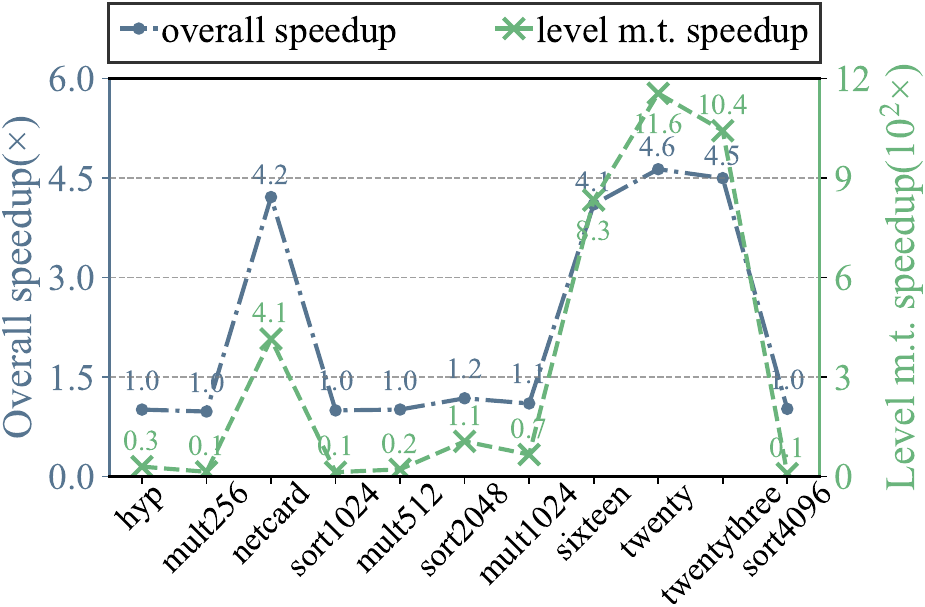} 
    \caption{Efficiency comparison with \inc for \rf}
    \label{fig:exp_efficiency_rf}
\end{figure}
\minisection{Exp-1.1: Performance evaluation with \rw}
To assess the efficiency of our~\incour integrated within \rw, we compare its performance with \inc and \incpq on benchmarks, as shown in~\Cref{tab:rewrite_comparison}. 
Note that, ``Level m.t.~time'' represents the time overhead for maintaining level constraints, while ``All time'' encompasses the total runtime including cut enumeration, resynthesis, graph update, and level maintenance.
From~\Cref{tab:rewrite_comparison}, we have the following findings.

First, when \rw employs \incour for maintaining level constraints, the overall runtime consistently outperforms both \inc and \incpq across circuits of varying scales. 
On average, \incour achieves speedups of 10.5$\times$ and 16.5$\times$ compared to \inc and \incpq, respectively. 
This substantial improvement stems from \incour's  minimal localized traversals, \eg \dynto.  
Specifically, the level maintenance time of \incour is 1,812.6$\times$ and 2,833.2$\times$ faster than \inc and \incpq, respectively. 
These results validate the algorithmic design rationale of \incour as analyzed in~\Cref{sec:analysis}.

Second, the QoR (Quality of Results) achieved by \rw using different level computation algorithms remains comparable across all benchmarks. 
Note that, \incour produces circuits with superior level metrics, achieving an average 2\% reduction compared to \inc and \incpq, primarily contributed by sort4096.
Besides, \incour yields circuits with fewer AND gates, showing an average 0.08\% improvement over the baseline algorithms. 
These minor differences arise from the distinct processing orders of $V$, as \incour follows a partial topological order. 
Moreover, all three algorithms produce nearly identical ratios of rewritable nodes during \rw operations, hence we present a single ``Gain node ratio'' column representing all approaches. 

Third,
\inc outperforms \incpq in most scenarios due to \incpq's overhead from maintaining a priority queue for storing affected nodes requiring updates. \incpq demonstrates superior performance over \inc only when the AIG exhibits both a very large maximum level $\mathcal{L}_{\max}$ and a relatively small gain node ratio, \eg hyp and mult512, as analyzed in~\Cref{sec:analysis}.

\minisection{Exp-1.2: Performance evaluation with refactor}
To further evaluate the efficiency of  \incour with \rf, we compare its performance with \inc and \incpq on benchmarks, with the comparative results illustrated in~\Cref{fig:exp_efficiency_rf}.
Note that, the final AIG size and depth obtained from \rf when using \inc, \incpq, and \incour are comparable.
Due to space limitations, \Cref{fig:exp_efficiency_rf} only illustrates the speedup in overall time and level maintenance time achieved by \incour compared to \inc.
From~\Cref{fig:exp_efficiency_rf}, we have the following findings.

\begin{table}[tb!]
\caption{Efficiency comparison with \incour for \rss}
\label{tab:efficiency_resub}
\setlength{\tabcolsep}{2pt} 
\resizebox{.98\columnwidth}{!} {
    \begin{tabular}{l|rr|rr|rr} \toprule
         & \multicolumn{2}{c|}{\inc} & \multicolumn{2}{c|}{\incpq}  & \multicolumn{2}{c}{\incour} \\  \cline{2-7}
    \multirow{2}{*}[2ex]{Circuit}  & \makecell{Level m.t. \\ time (s)} & \makecell{All \\ time (s)} & \makecell{Level m.t. \\ time (s)} & \makecell{All \\ time (s)} & \makecell{Level m.t. \\ time (s)} & \makecell{All \\ time (s)}  \\ \midrule
    hyp                       & 0.9            & 2.3      & 0.1            & 1.6      & 0.0             & 1.5       \\
    mult256                   & 3.6            & 9.7      & 3.6            & 10.1     & 2.8             & 8.9       \\
    netcard                   & 0.0            & 50.4     & 0.0            & 52.8     & 0.0             & 52.2      \\
    sort1024                  & 47.0           & 74.1     & 97.2           & 126.0    & 40.1            & 67.8      \\
    mult512                   & 36.9           & 62.2     & 37.3           & 64.3     & 28.5            & 54.6      \\
    sort2048                  & 1,021.3        & 1,127.9  & 1,656.1        & 1,772.4  & 463.8           & 574.6     \\
    mult1024                  & 601.5          & 706.1    & 692.8          & 804.0    & 561.1           & 665.5     \\
    sixteen                   & 0.2            & 207.5    & 0.2            & 222.7    & 0.6             & 207.5     \\
    twenty                    & 0.4            & 279.3    & 0.3            & 313.7    & 0.8             & 284.7     \\
    twentythree               & 0.3            & 322.8    & 0.2            & 338.7    & 0.9             & 328.8     \\
    sort4096                  & 29,742.6       & 30,219.1 & 38,366.3       & 38,833.9 & 11,304.9        & 11,764.3  \\ \midrule
    \makecell{Avg. impro.}                   & 2.5            & 2.4      & 3.3            & 3.0      & 1               & 1        \\  \bottomrule
    \end{tabular}
}
\end{table}

First, when \rf employs \incour for maintaining level constraints, \incour consistently outperforms both baselines. Compared to \inc, \incour achieves an average speedup of 2.3$\times$ in overall time, with a corresponding 337.0$\times$ improvement in level update time. 
Similarly, \incour demonstrates superior performance over \incpq, delivering an average speedup of 2.4$\times$ in overall time and 365.7$\times$ in level maintenance time.
  
Second,  the optimization potential of \rf is generally more limited compared to \rw, as evidenced by the average gain node ratio of only 4\% for \rf versus 13\% for \rw (as detailed in Exp-1.1). 
When the overall time improvements are marginal, the corresponding gain node ratios tend to be very small, \eg hyp, sort1024, and sort4096 exhibit gain node ratios of merely 0.9\%, 0.4\%, and 0.2\%, respectively.  
Therefore, \incour achieves greater efficiency improvements on operations with higher optimization potential, which naturally require longer baseline runtimes.

\subsection{Scalability Analysis of Algorithm~\incour}  
\minisection{Exp-2: Performance Evaluation with~\rss}
To evaluate the scalability of our level maintenance algorithm, we extend it to~\rss. 
For \rss, we adapt \incour to use a priority queue for forward level propagation due to the transitive fanout cone requirements, while maintaining the original approach for order maintenance~\dynto and reverse level computation~\dynrl.
The comparative results with \inc and \incpq are presented in~\Cref{tab:efficiency_resub}, yielding the following findings.

First, 
the adapted \incour consistently outperforms both \inc and \incpq across nearly all benchmarks. On average, \incour achieves speedups of 2.4$\times$ and 3.1$\times$ in overall runtime compared to \inc and \incpq, respectively. The corresponding level maintenance time shows efficiency improvements of 2.5$\times$ and 3.3$\times$, respectively.

Second, 
the efficiency gains of the adapted \incour in \rss are less pronounced compared to those observed in \rw~\Cref{tab:rewrite_comparison}. This reduction stems from the modified forward level update, which involves priority queue operations and cannot be bounded by $O(\Delta \log \Delta )$ time, due to the potential for wider-reaching fanout dependencies.

\begin{table*}[tb!]
\setlength{\tabcolsep}{4pt} 
\caption{Affected region comparison of \incour for \rw and \rf}
\label{tab:traversal_number}
\centering
\resizebox{.98\textwidth}{!} {
\begin{tabular}{l|rrr|rrrr||rrr|rrrr} \toprule
 & \multicolumn{3}{c|}{\inc for \rw} & \multicolumn{4}{c||}{\incour for \rw} & \multicolumn{3}{c|}{\inc for \rf} & \multicolumn{4}{c}{\incour for \rf} \\
 \cline{2-15}
\multirow{2}{*}[2ex]{Circuit}   & \#\kw{NTL}    & \#\kw{NTR}    & \makecell{Level m.t. \\ time (s)}      &\#\kw{NTL}     & \#\kw{NTR}   & \#\kw{IPTO}  &\makecell{Level m.t. \\ time (s)}  & \#\kw{NTL}    & \#\kw{NTR}    & \makecell{Level m.t. \\ time (s)}   & \#\kw{NTL}     & \#\kw{NTR}   & \#\kw{IPTO}  &\makecell{Level m.t. \\ time (s)}    \\ \midrule
hyp                       & 0.001    & 0.005    & 0.01     & 2.0    & 0.001 & 0.001   & 0.01 & 0.2     & 0.2       & 0.3       & 2.0   & 0.04 & 0.02   & 0.01 \\
mult256                   & 91.7     & 165.3    & 16.7      & 2.2    & 0.4   & 0.3     & 0.1  & 1.6     & 0.9       & 1.0       & 2.1   & 0.2   & 0.1     & 0.1   \\
netcard                   & 0.7      & 470.1    & 8.6       & 2.0    & 0.4   & 0.009   & 0.2  & 0.013   & 2,799.9   & 45.6      & 2.0   & 0.5   & 0.011   & 0.1   \\
sort1024                  & 480.6    & 934.3    & 357.3     & 2.1    & 0.3   & 0.2     & 0.4  & 0.2     & 4.8       & 1.3       & 2.0   & 0.004 & 0.002   & 0.1   \\
mult512                   & 102.2    & 42.3     & 55.3      & 2.2    & 0.4   & 0.3     & 0.5  & 0.1     & 0.6       & 5.3       & 2.1   & 0.1   & 0.1     & 0.2   \\
sort2048                  & 1,435.9  & 1,698.2  & 3,462.1   & 2.1    & 0.3   & 0.2     & 1.4  & 50.4    & 0.03       & 50.9      & 2.0   & 0.008 & 0.008   & 0.5   \\
mult1024                  & 415.0    & 465.3    & 1,162.5   & 2.2    & 0.4   & 0.4     & 2.3  & 1.7     & 1.0       & 76.5      & 2.1   & 0.2   & 0.1     & 1.2   \\
sixteen                   & 4.8      & 3,528.3  & 6,213.5   & 2.5    & 0.5   & 0.3     & 3.9  & 1.0     & 1,208.4   & 2,391.8   & 2.5   & 0.3   & 0.2     & 2.9   \\
twenty                    & 5.9      & 3,923.6  & 9,372.6   & 2.5    & 0.5   & 0.3     & 4.2  & 1.3     & 1,417.0   & 3,777.5   & 2.5   & 0.3   & 0.3     & 3.3   \\
twentythree               & 6.4      & 4,133.3  & 11,168.2  & 2.5    & 0.5   & 0.3     & 5.0  & 1.5     & 1,529.9   & 4,482.6   & 2.5   & 0.3   & 0.3     & 4.3   \\
sort4096                  & 5,695.0  & 7,960.6  & 52,543.5  & 2.1    & 0.3   & 0.2     & 5.4  & 0.1     & 0.2       & 13.2      & 2.0   & 0.004 & 0.002   & 1.5   \\ \midrule
\makecell{Average}    & 748.9    & 2,120.1  & 7,669.1   & 2.2    & 0.4   & 0.2     & 2.1  & 5.3     & 633.0     & 986.0     & 2.2   & 0.2   & 0.1     & 1.3    \\ \bottomrule
\end{tabular} 
}\\
\#\kw{NTL} and \#\kw{NTR} refer to the average number of \underline{n}odes \underline{t}raversed during \underline{l}evel computation and \underline{r}everse level computation, respectively, (\ie normalized by~$|V|$). 
\#\kw{IPTO} denotes the average number of nodes requiring \underline{i}nvalid \underline{p}artial \underline{t}opological \underline{o}rder updates within~\incour.
\vspace{-4ex}
\end{table*}
 
\begin{figure}[tb!]
    \centering
    \subfloat[\rfz\label{subfig:rfz}]{
		\includegraphics[width=0.486\columnwidth]{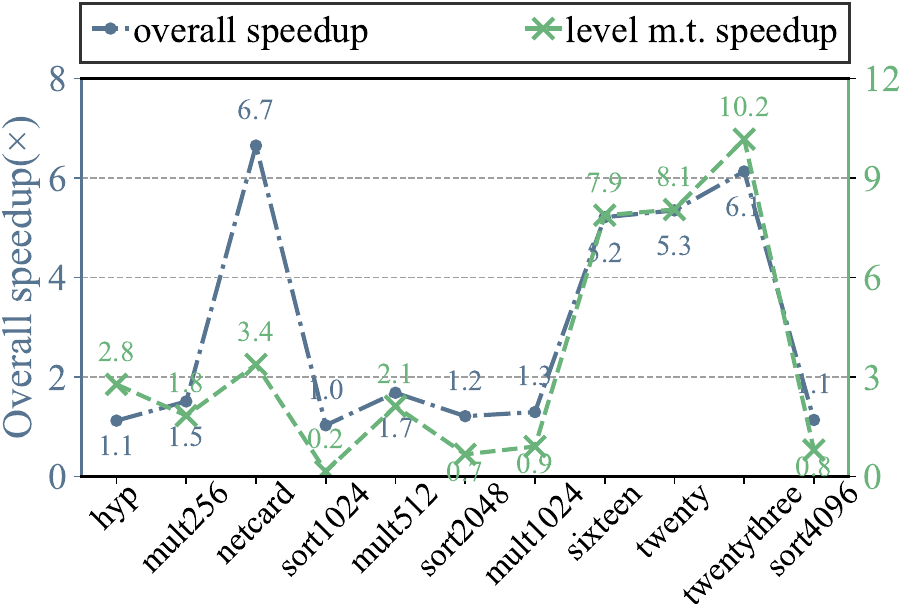}} 
    \subfloat[\rwz\label{subfig:rwz}]{
		\includegraphics[width=0.498\columnwidth]{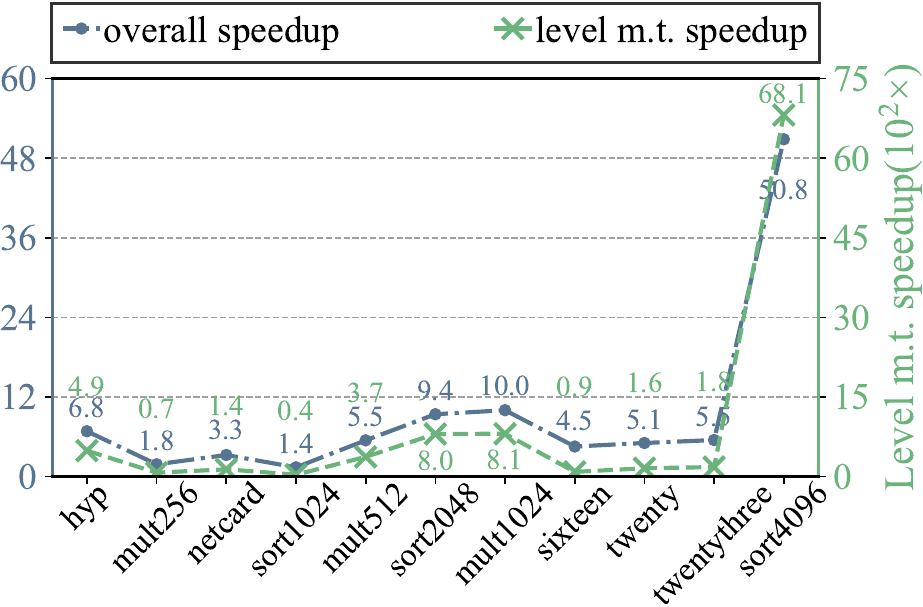}} 
    \caption{Efficiency comparison with \inc using zero gain.}
    \label{fig:eff_zero_gain}
\end{figure}

\subsection{Boundedness Analysis of Algorithm~\incour}  
\minisection{Exp-3: Affected region comparison of~\incour} 
To validate the boundedness of~\incour, we analyze the average affected regions during level maintenance, as shown in~\Cref{tab:traversal_number}. 
It presents the per-node statistics of traversed regions for both \inc and \incour across \rw and \rf operations, revealing the following key findings.

First, for both \rw and \rf operations, \incour consistently maintains small constant values for the average affected regions (\#\kw{NTL}, \#\kw{NTR}, and \#\kw{IPTO}), regardless of AIG size. 
This aligns with our theoretical analysis in~\Cref{sec:method_single}, where \#\kw{NTL} typically remains around 2-3 nodes, consistent with the constant in~\Cref{the:level_complexity}. 
Furthermore, the level m.t.~time of \incour scales linearly with AIG size in \rw and \rf, \eg  as the AIG size increases from 0.214$\times 10^6$ to 41.9$\times 10^6$ nodes, the level m.t. time grows from 0.01s to 5.4s. 
This linear relationship is consistent with $O(|V| \Delta \log \Delta )$ established in~\Cref{the:main_complexity}, as $\Delta $  is very small in practice,
validating the analysis in~Theorems~\ref{the:topo_order_complexity}, \ref{the:level_complexity}, and \ref{the:reverse_complexity}.

Second, in contrast, \inc exhibits unbounded behavior for both level and reverse level computations, where even the affected regions cannot be constrained by the modified subgraph or its neighborhood. 
As the circuit scale increases, the affected regions grow substantially, thereby \inc is an incrementally unbounded algorithm~\cite{Ramalingam1996dynamic, fan2022inc}. 
This fundamental difference results in \inc requiring an average of 7,669.1s for level constraint maintenance in \rw operations, while \incour maintains the level in merely 2.1s.

\subsection{Parameters Analysis of Algorithm~\incour}   
\minisection{Exp-4: Zero-gain parameter evaluation of~\incour}   
To further assess the robustness of \incour under varying optimization scenarios, we evaluate the efficiency of \rw and \rf with \emph{-z} parameter enabled, as shown in~\Cref{fig:eff_zero_gain}.
It allows subgraph transformations even when zero node gain is achieved, thereby expanding the structural exploration space of AIG optimization under aggressive strategies. 
From~\Cref{fig:eff_zero_gain}, we have the following findings.

First,  \Cref{subfig:rfz} shows that \incour maintains its performance advantage for \rfz.
Compared to \inc, it delivers an average overall runtime speedup of 2.9$\times$ and reduces level maintenance time by a factor of 352.3$\times$. 

Second, as shown in~\Cref{subfig:rwz},  \incour substantially outperforms \inc during \rwz operation.
It achieves an average overall runtime speedup of 9.5$\times$, accompanied by a remarkable 904.7$\times$ reduction in level maintenance time.
These gains mirror those seen with standard \rw (Exp-1.1), underscoring the consistent efficiency of \incour even when exploring a broader optimization search space.

\subsection{Runtime Profile Analysis of Algorithm~\incour} 
\minisection{Exp-5: Comparative runtime breakdown of~\incour} 
To analyze the performance impact of our~\incour, we compare runtime breakdowns against \inc, as shown in~\Cref{fig:runtime_breakdown}.
It shows the average time distribution across cut enumeration, resynthesis, level maintenance, and graph update stages for \rw, \rf, and \rss on the benchmark.
From~\Cref{fig:runtime_breakdown}, we have the following findings.

\incour dramatically reduces level maintenance overhead for \rw and \rf operations. For \rw, level maintenance time drops from 7,669s (93.2\% of total) to 2.1s (0.4\% of total). Similarly, for \rf, it decreases from 985s (63.7\% of total) to 1.3s (0.2\% of total).
Although \rss cannot achieve the same $O(|V| \Delta \log \Delta )$, \incour still delivers significant gains, reducing level maintenance time from 2,859s to 1,127.6s (2.5$\times$ speedup).
With \incour, level maintenance becomes negligible for \rw and \rf, thereby establishing a foundation for tractable optimization of large-scale circuits.

\begin{figure}[tb!] 
    \centering
    \includegraphics[width=.7\linewidth]{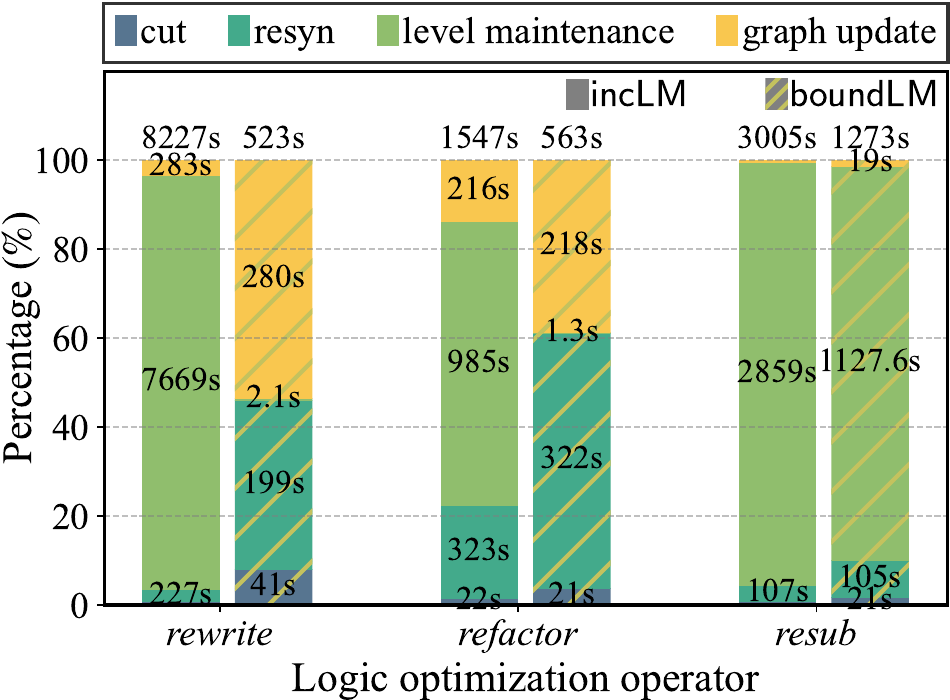}
    \caption{Proportion of time for each part in logic optimization.} 
    \label{fig:runtime_breakdown}   
\end{figure}

\section{Conclusion}
\label{sec:conclusion} 
We analyzed the dynamic level maintenance problem in iterative, level-constrained logic optimization by reframing it through partial topological order maintenance. 
Leveraging this insight, we developed a bounded algorithm, \incour, for dynamically maintaining topological order (\dynto), node levels (\dynlev), and reverse levels (\dynrl), in $O(|V| \Delta \log\Delta)$ time for $|V|$ updates.   
As a result, on large-scale benchmarks, \incour achieves an average 6.4$\times$ overall speedup in \rw and \rf, driven by a 1074.8$\times$ acceleration in the level maintenance, without any degradation in QoR.

This work establishes a new paradigm for enhancing logic synthesis by applying principles from dynamic graph algorithms. 
It provides a foundation for developing more scalable and efficient optimization tools capable of handling the complexity of next-generation integrated circuits.


{
  \bibliographystyle{abbrv}
  \bibliography{ref/all}
}

\vspace{-8ex}
\begin{IEEEbiography}[{\includegraphics[width=1in,height=1.25in,clip,keepaspectratio]{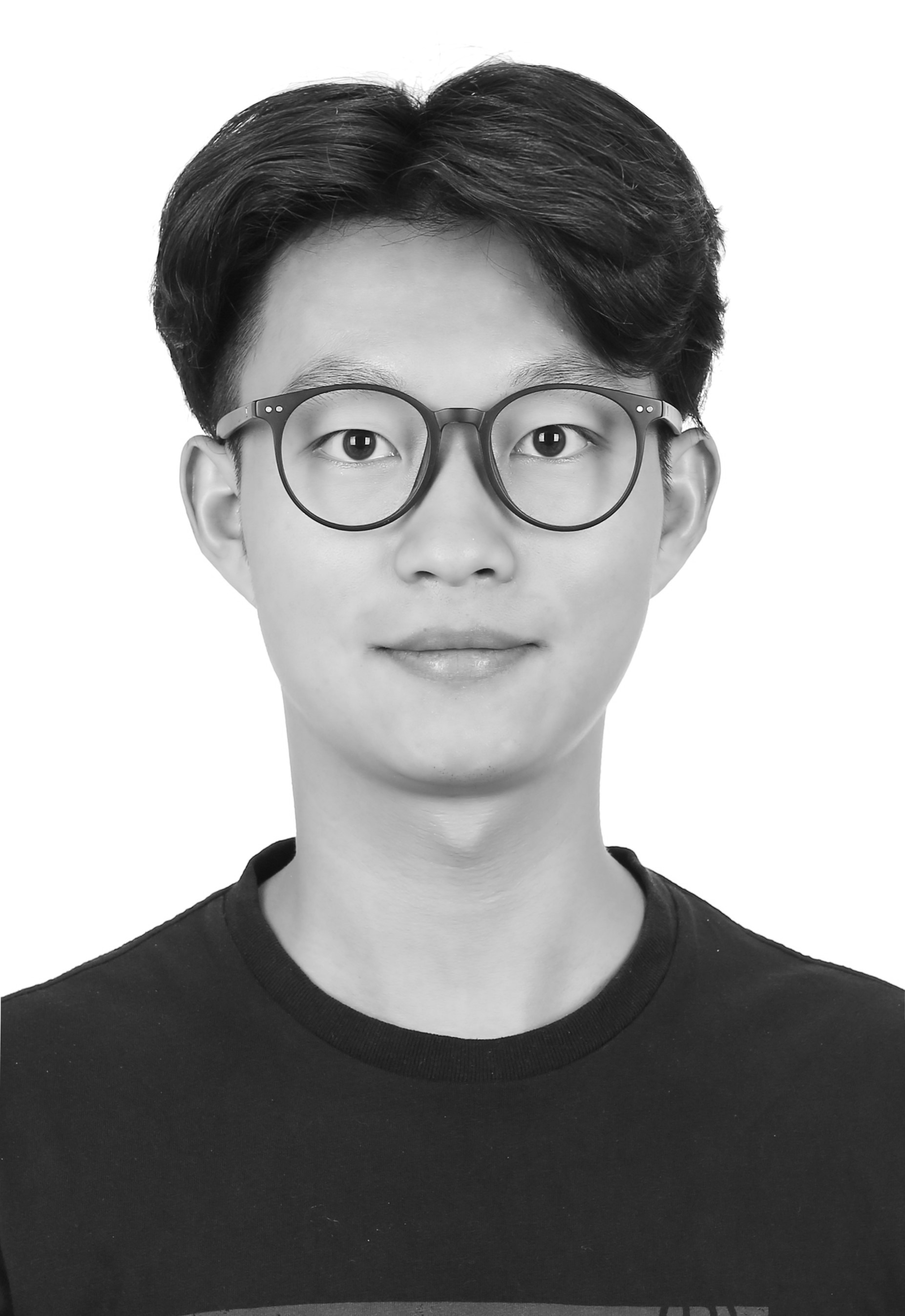}}] 
{Junfeng Liu} received the PhD degree in computer science from Beihang University, China, in 2024. 
He is currently a postdoctoral researcher at Pengcheng Laboratory, China.
His research interests include EDA logic synthesis and graph data management. He has published over 10 papers in journals and conferences such as TCAD, TKDE, TODAES, ICCD, WSDM, and CIKM.
\end{IEEEbiography}

\begin{IEEEbiography}[{\includegraphics[width=1in,height=1.25in,clip,keepaspectratio]{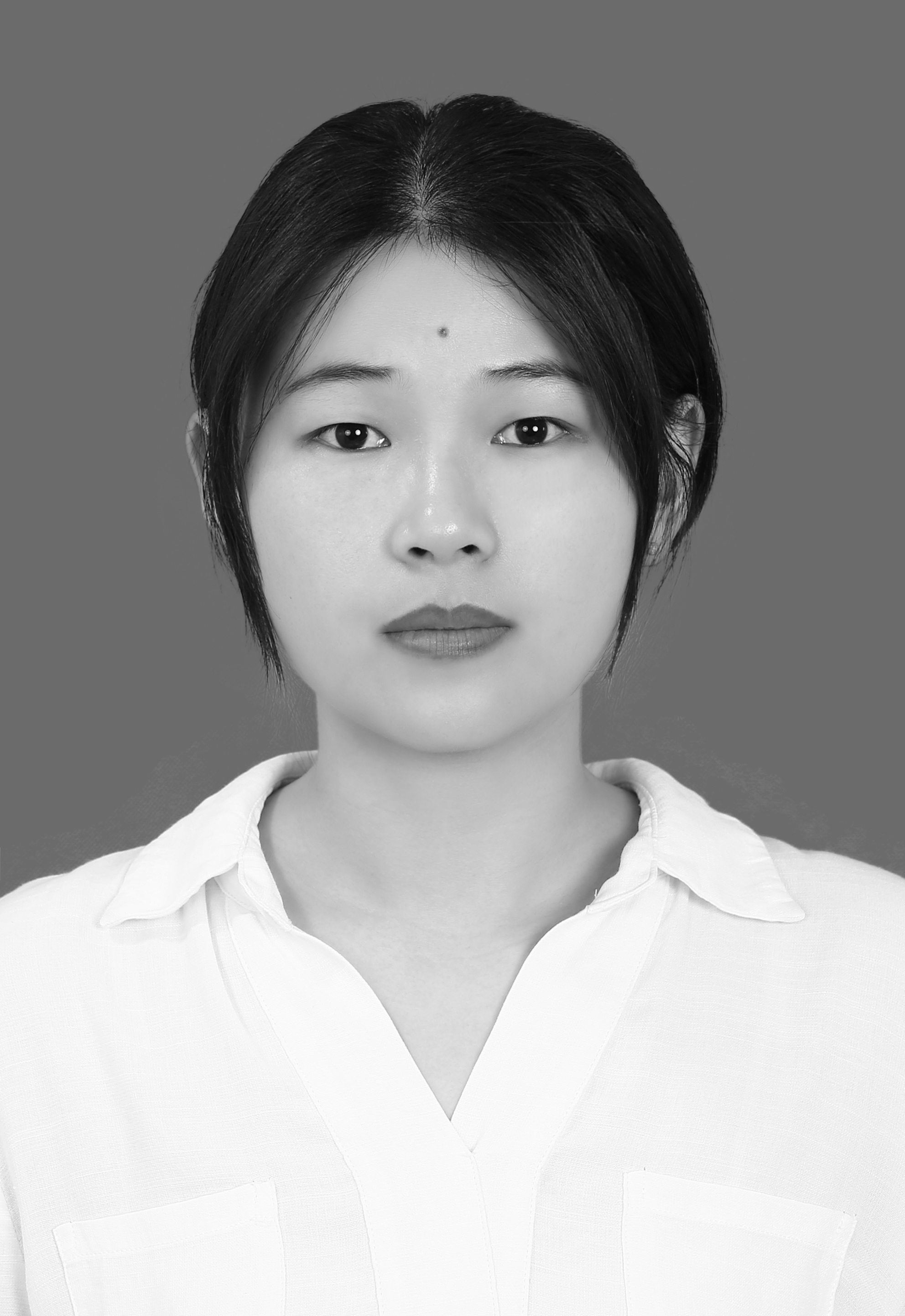}}] 
{Qinghua Zhao} received the PhD degree in computer science from Beihang University, China, in 2024. 
She previously worked as a visiting scholar in the Coastal NLP Group at the Department of Computer Science, University of Copenhagen, Denmark, in 2023.
She is currently a lecturer at the School of Artificial Intelligence and Big Data, Hefei University, Hefei, China.
Her research interests include data-driven AI, NLP, and computer-aided design.
\end{IEEEbiography}

\vspace{-8ex}
\begin{IEEEbiography}[{\includegraphics[width=1in,height=1.25in,clip,keepaspectratio]{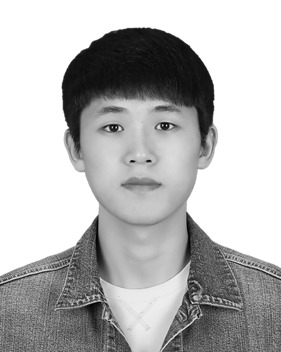}}] 
{Liwei Ni} received the B.S. degree in computer science from the Anhui University of Finance and Economics, Bengbu, China, in 2018, and the M.S. degree in Software Engineering from Beihang University, Beijing, China, in 2021. 
He is pursuing the Ph.D. degree with the Institute of Computing Technology, Chinese Academy of Sciences, Beijing, China, and is jointly trained with Pengcheng Laboratory.
His research focuses on logic synthesis.
\end{IEEEbiography}

\vspace{-8ex}
\begin{IEEEbiography}[{\includegraphics[width=1in,height=1.25in,clip,keepaspectratio]{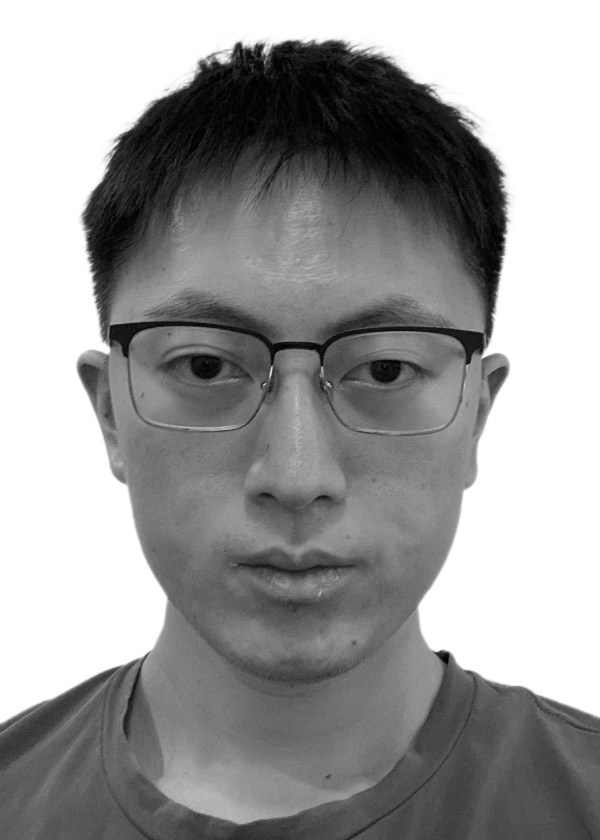}}] 
{Jingren wang}
received his Master's degree in computing science from the University of Glasgow, Scotland, UK, in 2022. He is currently a Research Assistant at the Hong Kong University of Science and Technology (Guangzhou), China. His research focuses on logic synthesis, with particular interests in Boolean algebra and combinational optimization.
\end{IEEEbiography}

\vspace{-8ex}
\begin{IEEEbiography}[{\includegraphics[width=1in,height=1.25in,clip,keepaspectratio]{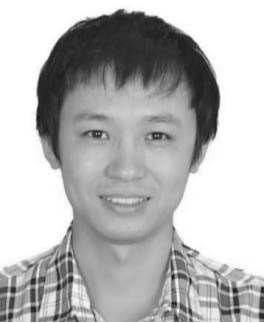}}]
{Biwei Xie} received his Ph.D. degree from the Institute of Computing Technology, Chinese Academy of Sciences, in 2018.
He is an Associate Professor at the same institution. 
His research interests encompass open EDA, open-source chip design, high-performance computing, and computer architecture. 
His work has been published in leading international conferences such as CGO, ICS, ICCAD, and DATE.
\end{IEEEbiography}
 
\vspace{-8ex}
\begin{IEEEbiography}[{\includegraphics[width=1in,height=1.25in,clip,keepaspectratio]{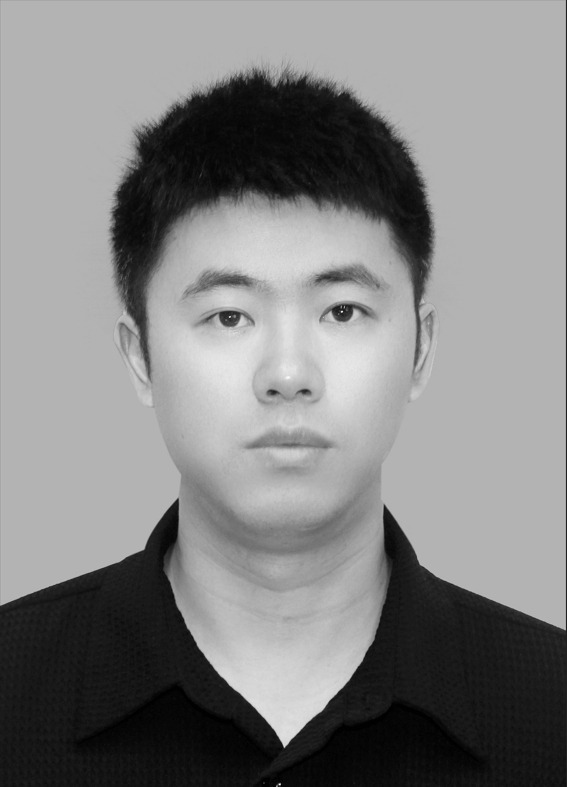}}] 
{Xingquan Li}
received the Ph.D degree from Fuzhou University, China in 2018. He is currently an Associate Researcher at Pengcheng Laboratory. His research interests include EDA and AI for EDA. His team has developed an open-source infrastructure of EDA and toolchain (iEDA). He has published over 60 papers in journals and conferences such as TCAD, TC, TVLSI, TODAES, DAC, ICCAD, DATE, ICCD, ASP-DAC, ISPD, NIPS, etc.  He received the Best Paper Award from ISEDA 2023.
\end{IEEEbiography}

\vspace{-8ex}
\begin{IEEEbiography}  [{\includegraphics[width=0.8in,height=1in,clip,keepaspectratio]{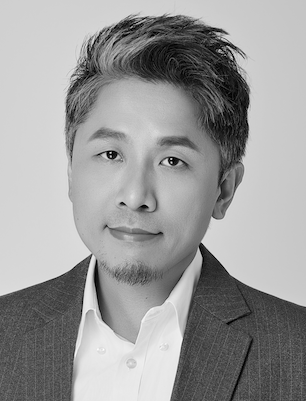}}]
    {Bei Yu}
    (M'15-SM'22)
    received the Ph.D.~degree from The University of Texas at Austin in 2014.
    He is currently an Associate Professor in the Department of Computer Science and Engineering, The Chinese University of Hong Kong.
    He has served as TPC Chair of ACM/IEEE Workshop on Machine Learning for CAD, and in many journal editorial boards and conference committees.
    He received eleven Best Paper Awards from ICCAD 2024 \& 2021 \& 2013,
    IEEE TSM 2022, DATE 2022, ASPDAC 2021 \& 2012, ICTAI 2019, Integration, the VLSI Journal in 2018,
    ISPD 2017, SPIE Advanced Lithography Conference 2016, six ICCAD/ISPD contest awards,
    and many other awards, including DAC Under-40 Innovator Award (2024), IEEE CEDA Ernest S.~Kuh Early Career Award (2022), and Hong Kong RGC Research Fellowship Scheme (RFS) Award (2024).
\end{IEEEbiography}

\vspace{-8ex}
\begin{IEEEbiography}[{\includegraphics[width=1in,height=1.25in,clip,keepaspectratio]{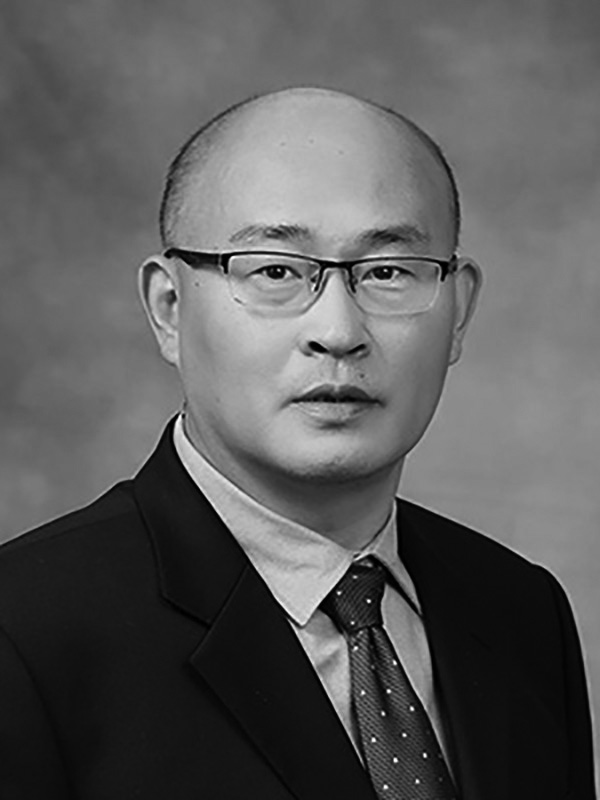}}]  
{Shuai Ma} (Senior Member, IEEE) received the Ph.D. degrees in computer science from Peking University, China, in 2004, and from The University of Edinburgh, England, in 2010, respectively. He is a professor with the School of Computer Science and Engineering, Beihang University, China. He was a postdoctoral research fellow with the Database Group, University of Edinburgh, a summer intern at Bell Labs, Murray Hill, NJ, and a visiting researcher of MSRA. His current research interests include big data, database theory and systems, data cleaning and data quality, and graph data analysis.
\end{IEEEbiography}

\end{document}